\let\saveqed\qed
\renewcommand\qed{%
	\ifmmode\displaymath@qed
	\else\saveqed
	\fi}
\let\NAT@parse\undefined
\newtheorem{theorem}{Theorem}[section]
\newtheorem{prop}[theorem]{Proposition}
\newtheorem{lemma}[theorem]{Lemma}
\newtheorem{corollary}[theorem]{Corollary}
\newtheorem{remark}[theorem]{Remark}
\g@addto@macro\normalsize{%
\setlength\abovedisplayskip{4pt}
\setlength\belowdisplayskip{4pt}
}
\newcommand\footnoteref[1]{\protected@xdef\@thefnmark{\ref{#1}}\@footnotemark}
\begin{document}
%
\title{Control Under Action-Dependent Markov Packet Drops: An
  Event-Triggered Approach}
\author{Shourya Bose \qquad Pavankumar Tallapragada %
  \thanks{This work was partially supported by Robert Bosch Centre for
    Cyber-Physical Systems, Indian Institute of Science, Bengaluru. A
    part of this work was presented in the Fifth Indian Control
    Conference as~\cite{SB-PT:2019-icc}. }%
  \thanks{S. Bose is with Department of EEE, BITS Pilani - KK Birla
    Goa Campus {\tt\small(f20140692@goa.bits-pilani.ac.in)} and P.
    Tallapragada is with the Department of Electrical Engineering,
    Indian Institute of Science {\tt\small (pavant@iisc.ac.in) }. }%
}
\maketitle
\raggedbottom
\begin{abstract}
  In this paper, we consider the problem of second moment
  stabilization of a scalar linear plant with process noise. We assume
  that the sensor must communicate with the controller over an
  unreliable channel, whose state evolves according to a Markov chain,
  with the transition matrix on a timestep depending on whether there
  is a transmission or not on that timestep. Under such a setting, we
  propose an event-triggered transmission policy which meets the
  objective of exponential convergence of the second moment of the
  plant state to an ultimate bound. Furthermore, we provide upper
  bounds on the transmission fraction of the proposed policy. The
  guarantees on performance and transmission fraction are verified
  using simulations.
\end{abstract}

\begin{IEEEkeywords}
  networked control systems, control under communication constraints,
  action-dependent channel, event-triggered control, second moment
  stability
\end{IEEEkeywords}

\section{Introduction}

In networked control systems (NCS) feedback occurs over a
communication channel that may introduce a number of effects such as
sampling, packet drops and time delays. The resulting limitations on
the communication resources necessitates the design of parsimonious,
system aware communication and control. In this paper,
we study the problem of controlling a scalar linear system, using an
event-triggered approach, over an unreliable action-dependent Markov
channel.

\subsubsection*{Literature Review}

The last two decades have seen extensive research on NCS or control
over networks~\cite{JPH-etal:2007, PP-etal:2017-survey,
  ASM-AVS:2009-book}. Stochastic NCS over lossy or unreliable
communication channels is also an extensively studied
area~\cite{ASM-AVS:2009-book, LS-etal:2007, SY-TB:2013-book}. A common
assumption in this literature is that the packet drops are independent
and identically distributed (i.i.d.) on every timestep. However, some
works consider packet drop probabilities which evolve in time as a
Markov process. For example,~\cite{KY-MF-LX:2011, JW-etal:2018}
consider the problem of estimation and~\cite{LY-etal:2018} considers
the problem of stabilization of nonlinear systems, each under Markov
packet drops. References~\cite{KY-LX:2011, KO-HI:2014} explore the
problem of finite data rate control under Markovian packet losses. On
the other hand, \cite{PM-LC-MF:2013} considers the problem of control
over a channel that supports a finite data rate, with the data rate
evolving according to Markov chain. Finally, \cite{LX-LX-NX:2018} is
concerned with mean-square stabilization of a linear system under
Markov losses and Gaussian transmission noise,
while~\cite{LS-VG-GC:2019} considers mean-square stabilization over an
AWGN channel with fading subject to a Markov chain.

In the literature on communication systems, Markov models for channels
have a long history, starting with the work of Gilbert~\cite{ENG-1960}
and Elliott~\cite{EOE-1963}. The
paper~\cite{PS-etal:2008-markovwireless} is a relatively recent survey
on Markov modeling of fading channels.  Channels whose properties
depend on past actions have also been explored, including as models
for other applications. The reference~\cite{VJ-RJL-NCM:2019} is a
recent survey on models and research work on systems whose operation
depends on a ``utilization dependent component'' such as queueing in
action dependent servers, iterative learning algorithms and systems
with energy harvesting components, among other problems.

In this paper, the actions we seek to design are the transmission
times from a sensor to a controller. In particular, we take an
event-triggered approach, which in the last decade has been a very
popular method for parsimonious transmissions in
NCS~\cite{WPMHH-KHJ-PT:2012, ML:2010, DT-SH:2017-book}. However, the
volume of work on event-triggered control in a stochastic setting is
still not as considerable as in the deterministic setting. Some papers
that consider random packet drops are~\cite{BD-VG-DEQ-MJ:2017,
  MR-KHJ:2009, RB-FA:2012, MHM-DT-AM-SH:2014, VD-MH:2017},
while~\cite{DEQ-VG-WM-SY:14, RPA-DM-DVD:15} study stochastic stability
with event-triggered control. This paper builds upon our previous work
on event-triggered control under Bernoulli packet
drops~\cite{PT-MF-JC:2018-tac} and event-triggered control over Markov
packet drops~\cite{SB-PT:2019-icc}.

\subsubsection*{Contributions}

In this paper, we study the problem of second-moment stabilization of
a scalar linear plant with process noise over an unreliable
channel/network. In particular, the channel state determines the
packet drop probability and evolves according to a finite state space
Markov process that also depends on the past transmission actions.

Our first contribution is modeling an NCS with an action-dependent
Markov channel and design of a transmission policy over such a
channel. To the best of our knowledge, such channels have not been
considered before in the context of NCS.  Our second contribution is a
two-step design of event-triggered transmission policy. This design
approach is in a spirit similar to our earlier
work~\cite{PT-MF-JC:2018-tac, SB-PT:2019-icc}. We provide a necessary
condition on the plant dynamics and the channel parameters for our
transmission policy to work. This necessary condition is similar to
the conditions often found in the data rate limited
control~\cite{MF-PM:2014} and NCS in general. Our third contribution
is analysis of the proposed event-triggered transmission policy and
guarantee of second moment stability with an exponential convergence
to a desired ultimate bound. The fourth contribution is an upper bound
on the transmission fraction (the fraction of timesteps, in a time
duration, on which a transmission occurs) resulting from the
event-triggered policy. We provide upper bounds on the asymptotic
transmission fraction as well as for the `transient' transmission
fraction.

\subsubsection*{Notation}

We let $\real$, $\integers$, $\intpos$, and $\intnonneg$ denote the
sets of real numbers, integers, natural numbers and non-negative
integers, respectively. We use the standard font for scalar quantities
while boldface for vectors and matrices. The notations $\onevec$,
$\canbasis{i}$, and $\id$ denote the vector with all 1s, the vector
whose $\tth{i}$ takes the value $1$ and 0 everywhere else, and the
identity matrix, respectively, of appropriate dimensions. We use
$\specrad{\mb{A}}$ to denote the spectral radius of a real square
matrix $\mb{A}$. We denote the space of probability vectors
(i.e. vectors with non-negative entries that sum to $1$) of $n$
dimensions as $\svecspace$. The notation $\prob[.]$ denotes the
probability of an event. We denote a generic transmission policy using
$\tpolicy$, and $\expect{\tpolicy}{\;}$ represents expectation of a
random variable under a given transmission policy $\tpolicy$. We
denote the cardinality of a finite set $\Mcal{S}$ as $|\Mcal{S}|$. For
integers $a$ and $b$, we let $\intrangecc{a}{b}$, $\intrangeoo{a}{b}$,
and $\intrangeoc{a}{b}$ represent the finite sets
$[a,b]\cap\integers$, $(a,b)\cap\integers$, and $(a,b]\cap\integers$,
respectively. For random variables $X$, $Y$ and $Z$, the \emph{tower
  property of conditional expectation} is
\begin{equation*}
  \expect{}{\ \expect{}{X\,\vert\, Y, Z} \ \vert\, Y \ } =
  \expect{}{X\,\vert\, Y} .
\end{equation*}

\section{System Description}
\label{sec:SystemDescription}

In this section, we describe the model of the plant, channel,
controller and the control objective.

\subsection{Plant and Controller Model} 
\label{sec:plant-controller}

Consider a scalar linear plant with process noise
\begin{equation}
  x_{k+1}= a x_k + u_k + v_k, \quad  x_k,\,u_k,\,v_k \in \real, \;
  \forall k \in \intnonneg.
  \label{eq:plant_evolution}
\end{equation}
The parameter $a$ is the inherent plant gain, which we assume is
unstable, i.e. $|a|>1$. The variables $x_k$, $u_k$ and $v_k$ are the
plant state, the control input and the process noise, respectively at
timestep $k \in \intnonneg$. We assume that $v_k$ is independent and
identically distributed (i.i.d.) across timesteps $k$ and independent
of all the other system variables. Its distribution has zero mean and
finite variance, i.e.
$\expect{}{v_k} = 0, \, \expect{}{v_k^2} \rdef M>0$.

We assume that, at each timestep, a sensor perfectly measures the
plant state and can decide on whether to transmit a packet with the
plant state to the controller. We denote the sensor's transmission
decision on timestep $k$ by $\tk$ and we let
\begin{equation*}
  \tk \ldef \begin{cases}
    1,& \text{if sensor transmits at } k\\
    0,& \text{if sensor does not transmit at } k .
  \end{cases}
\end{equation*}
The sensor determines $\tk$ at each timestep $k$ according to an
event-triggered \emph{transmission policy} on the basis of plant state
and all the information available to it on timestep $k$. 
Even if the sensor transmits a packet at timestep $k$ ($\tk = 1$), the
packet may be dropped by the communication channel according to a
packet drop model which we describe in Section~\ref{sec:channel}. We
let $\rk$ be the reception indicator, which takes values as follows
\begin{equation*}
  \rk\ldef\begin{cases}
    1,& \text{if $\tk=1$ and packet received}\\
    0,& \text{if $\tk=1$ and packet dropped}\\
    0,& \text{if $\tk=0$} .
  \end{cases}
\end{equation*}

The controller maintains a \emph{controller state}, $\xkp$, which it
uses to generate the input $u_k \ldef L \xkp$, where $L$ is a constant
such that $\abar \ldef (a+L) \in (-1, 1)$. The controller state $\xkp$
itself evolves as
\begin{equation}
  \label{eq:controller_state_def}
  \xkp = \begin{cases}
    x_k,&\text{if } \rk=1\\
    \hat{x}_{k}, & \text{if }\rk=0 ,
\end{cases}
\end{equation}
where $\hat{x}_k \ldef \abar\hat{x}^+_{k-1}$ is the \emph{estimate}
of the plant state given past data. Corresponding to the controller
state and plant state estimate, we define the \emph{estimation error}
$z_k$ and \emph{controller state error} $z_k^+$ as follows.
\begin{equation}
  \label{eq:system_controller_errors_def}
  z_k\ldef x_k - \hat{x}_k, \quad z_k^+ \ldef x_k-\xkp .
\end{equation}
The two quantities differ only on successful reception times. It is
possible to write the plant state evolution equation in terms of these
errors as follows.
\begin{subequations} \label{eq:composite_evol}
	\begin{align}
    x_{k+1} &= a x_k+L\xkp+v_k = \abar x_k - L z_k^+ +
              v_k \label{eq:composite_evol_plant} 
    \\
    \hx_{k+1} &= \abar\hx^+_k . \label{eq:composite_evol_estimate}
	\end{align}
\end{subequations}
Equations~\eqref{eq:controller_state_def}-\eqref{eq:composite_evol}
compositely describe the evolution of the plant state, controller
state and the estimate of plant state.

\subsection{Channel Model} \label{sec:channel}

We let the communication channel be an action-dependent \emph{finite
  state space Markov channel} (FSSMC). We denote the \emph{channel
  state} at timestep $k$ by $\cstate \in \{ 1,\cdots,n \}$, with $n$ a
finite positive integer. We assume that the probability distribution
of $\gamma_{k+1}$ depends on $\cstate$ and $\tk$, the transmission
decision on timestep $k$. Thus, the evolution of the channel is an
action-dependent Markov process. We let $\pz_{ij}$ and $\po_{ij}$
denote the probabilities of the channel state transitioning from $j$
to $i$ given $\tk$ is equal to $0$ and $1$, respectively. Thus,
\begin{align*}
  \pz_{ij} &\ldef \prob \left[ \gamma_{k+1} = i \, \vert \,
          \cstate=j,\tk=0 \right]\\
  \po_{ij} &\ldef \prob \left[ \gamma_{k+1} = i \, \vert \,
          \cstate=j,\tk=1 \right] .
\end{align*}
We let $\pzero$ and $\pone$ be column-stochastic matrices, whose
$\tth{(i,j)}$ elements are $\pz_{ij}$ and $\po_{ij}$, respectively.

We model the unreliability of the channel through a packet drop
probability $e_i$ for each element $i$ of the channel state
space. Thus, if on timestep $k$ the channel state $\cstate = i$ and if
the sensor transmits a packet then the channel drops the it with
probability $e_i\in[0,1]$ and it communicates the packet successfully
to the controller with probability $(1 - e_i)$, i.e.,
\begin{equation*}
  \rk \ldef\begin{cases}
    1,& \text{w.p. $(1 - e_{\cstate})$ if $\tk=1$}\\
    0,& \text{w.p. $e_{\cstate}$ if $\tk=1$}\\
    0,& \text{if $\tk=0$} ,
  \end{cases}
\end{equation*}
where ``w.p.'' stands for ``with probability''. Thus, the packet drops
on each timestep is Bernoulli, though \emph{not} i.i.d.. We collect
the packet drop probabilities across all possible channel states in
the vector $\mb{e} \ldef [e_1,\ e_2,\ \cdots,\ e_n]^T \in
[0,1]^n$. Correspondingly, we define the transmission success
probability vector $\mb{d}$ as $\mb{d} \ldef \onevec-\mb{e}$.

\subsection{Sensor's Information Pattern}

Next, we describe the information available to the sensor to make the
transmissions decisions $\tk$. Apart from the plant state $x_k$ that
the sensor can measure perfectly on each timestep $k$, we assume that
if a successful reception occurs on timestep $k$, then the controller
acknowledges it by relaying the reception indicator variable $\rk$ and
the channel state $\cstate$ over an error-free feedback channel, which
the sensor may use the information only on subsequent timesteps.

To describe all the information available to the sensor on timestep
$k$ more formally, we first introduce the variables $\Rk$ and $\Rk^+$
to track the \emph{latest reception time before} and \emph{latest
  reception time until} timestep $k$, respectively. Thus,
\begin{equation*}
  \Rk \ldef \max\limits_i \left\{ i<k  :  r_i=1 \right\}, \;
  \Rk^+ \ldef \max\limits_i \left\{ i\leq k : r_i=1 \right\} .
\end{equation*}
The variable $\Rk$ is useful for the sensor's decision making while
$\Rk^+$ is helpful in the analysis. Further, we let $\Sj$ for
$j\in \intnonneg$ be the $\tth{j}$ successful random reception time,
that is,
\begin{equation*}
  S_0 = 0, \; \Sjnext \ldef \min \left\{ k>\Sj : \rk=1 \right\}, \, \forall
  \, j \in \intpos ,
\end{equation*}
where without loss of generality, we have assumed that the zeroth
successful reception occurs on time $0$.

From the controller feedback, the sensor knows $R_k$ and
$\gamma_{\Rk}$ before deciding $t_k$, from which the sensor can
utilize the channel evolution model to obtain the probability
distribution of the channel state $\pk \in \svecspace$ given $\Rk$,
$\gamma_{\Rk}$ and all the transmission decisions from $\Rk$ to $k-1$,
that is,
  \begin{equation*}
    \pk(i) \ldef \prob \left[ \cstate = i \ | \ \Rk, \
      \gamma_{\Rk}, \ \{ t_w \}_{\Rk}^{k-1} \right] ,
  \end{equation*}
  where $\pk(i)$ is the $\tth{i}$ element of the vector
  $\pk$. Thus, we can obtain $\pk$ recursively as
  \begin{equation}
  \mb{p}_{k+1} =
  \begin{cases}
    \pone \canbasis{\cstate}, & \text{if } \tk = 1 \text{ and } \rk =
    1\\
    \pzero \pk,& \text{if } \tk=0 \text{ and } \rk=0\\
    \pone \pk,& \text{if } \tk=1 \text{ and } \rk =0 .
  \end{cases}
  \label{eq:channel-state-prob}
\end{equation}
We also let
\begin{equation*}
  \pk^+ \ldef%
  \begin{cases}
    \delta_{\gamma_k}, \quad &\text{if } \rk = 1 \\
    \pk, \quad &\text{if } \rk = 0 .
  \end{cases}
\end{equation*}

We represent by $\Ik$ the information available to the sensor about
the controller's knowledge of plant state before transmission while we
use $\Ik^+$ to denote the information available to the sensor after
channel state feedback (if any). Thus, $\Ik^+=\Ik$ when $\rk=0$, and
$\Ik^+$ contains $\rk$ and $\cstate$ over $\Ik$ when $\rk=1$. Noting
the same, we define $\Ik$ and $\Ik^+$ as
\begin{subequations}
  \begin{align}
    \Ik &\ldef \{
          k,x_k,z_k,\Rk,x_{\Rk},\pk, t_{k-1},r_{k-1}\gamma_{k-1}
          \}, \label{eq:Ik_def} \\
    \Ik^+ &\ldef \{
            k,x_k,z_k^+,\Rk^+,x_{\Rk^+},\pk^+,\tk,\rk\cstate \}.
            \label{eq:Ikplus_def}     
  \end{align}
\end{subequations}
Note that the channel state feedback by the controller is represented
as $r_{k-1} \gamma_{k-1}$ and $\rk \cstate$ in $\Ik$ and $\Ik^+$,
respectively. If $\rk=1$ then $\rk\cstate=\cstate$, and if $\rk=0$
then $\rk\cstate=0$ and thus no channel state feedback is available.
Note that $\{\Ik\}_{k \in \intnonneg}$ and
$\{\Ik^+\}_{k \in \intnonneg}$ are action-dependent Markov
processes. In particular, the probability distribution of $I_k$
conditioned on $\{ I_s, t_{s} \}_{s=0}^{k-1}$ can be shown to be the
same as the one conditioned on $\{I_{k-1},t_{k-1}\}$. Similarly,
$\{I_k^+ \}$ is ``sufficient information'' to determine the
distribution of $I_{k+1}^+$ given all the past information


\subsection{Control Objective}

The control objective is exponential second-moment stabilization of
the plant state to an ultimate bound. Given the plant and the
controller models in Section~\ref{sec:plant-controller}, the only
decision making left to be designed is the sensor's transmission
policy $\tpolicy$, which determines $\tk$ for each timestep $k$. In
particular, we seek to design a feedback transmission policy using the
available information $\Ik$ on timestep $k$. The \emph{offline control
  objective} that we seek to guarantee is
\begin{equation}
  \label{eq:original-control-objective}
  \expect{\tpolicy}{x_k^2 \, \vert \, I_0^+} \leq \max \{
  c^{2k}x_0^2,B \}, \; \forall k \in \intnonneg ,
\end{equation}
which is to have the second moment of the plant state decay
exponentially at least at a rate of $c^2$ until it settles to the
ultimate bound $B$. We assume that the convergence rate parameter
$c^2 \in (\abar^2, 1)$. Note
that~\eqref{eq:original-control-objective} prescribes the restriction
on the plant state evolution in an offline fashion, in terms of only
the initial information. However, a recursive formulation of the
control objective is more conducive to designing a feedback
transmission policy.

To design a feedback transmission policy, we need to define an online
version of the control objective. First, we define the
\emph{performance function} $h_k$ for every timestep $k$ as follows
\begin{equation*}
  h_k \ldef x_k^2 - \max \{ c^{2(k-\Rk)}x_{\Rk}^2,B \}.
\end{equation*}
Then, the \emph{online objective} is to ensure
\begin{equation}
  \expect{\tpolicy}{h_k^2 \, \vert \, I_{\Rk}^+} \leq 0, \quad \forall
  k \in \intnonneg .
  \label{eq:online-obj}
\end{equation}
We borrow from Lemma~III.1 from~\cite{PT-MF-JC:2018-tac}, which
demonstrates that any transmission policy that satisfies the online
objective also satisfies the offline objective. 

\begin{lemma}[Sufficiency of the online
  objective~\cite{PT-MF-JC:2018-tac}]
  If a transmission policy $\tpolicy$ satisfies the online
  objective~\eqref{eq:online-obj} then it also satisfies the offline
  objective~\eqref{eq:original-control-objective}. \qed
  \label{lem:online}
\end{lemma}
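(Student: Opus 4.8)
The plan is to fix $k \in \intnonneg$ and bound $\expect{\tpolicy}{x_k^2 \mid I_0^+}$ by peeling off the plant state one successful reception at a time. The starting observation is that the benchmark $\max\{c^{2(k-\Rk)}x_{\Rk}^2, B\}$ in the performance function $h_k$ is measurable with respect to $I_{\Rk}^+$, since it depends only on $\Rk$ and $x_{\Rk}$, both components of $I_{\Rk}^+$. Hence the online objective~\eqref{eq:online-obj} says exactly $\expect{\tpolicy}{x_k^2 \mid I_{\Rk}^+} \le \max\{c^{2(k-\Rk)}x_{\Rk}^2, B\}$, and applying the tower property of conditional expectation together with the fact that $\{I_m^+\}$ is sufficient information (so that conditioning can be pushed down from $I_{\Rk}^+$ to $I_0^+$) yields
\begin{equation*}
  \expect{\tpolicy}{x_k^2 \mid I_0^+} \;\le\; \expect{\tpolicy}{\max\{c^{2(k-\Rk)}x_{\Rk}^2, B\} \mid I_0^+}.
\end{equation*}
It therefore suffices to show the right-hand side is at most $\max\{c^{2k}x_0^2, B\}$, and the natural route is to unroll the benchmark over the successful reception times $S_0 = 0 < S_1 < S_2 < \cdots$, noting $\Rk = S_N$ where $N$ is the number of successful receptions in $\intrangecc{1}{k-1}$.

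Concretely, I would prove by induction on the number of successful receptions in the interval $(S_j, k]$ that, on the event $\{S_j \le k\}$, one has $\expect{\tpolicy}{x_k^2 \mid I_{S_j}^+} \le \max\{c^{2(k-S_j)}x_{S_j}^2, B\}$ (the right-hand side being $I_{S_j}^+$-measurable because $I_{S_j}^+$ records its own time $S_j$ and $x_{S_j}$), and then specialize to $j = 0$ with $S_0 = 0$. The base case, no receptions in $(S_j,k]$, is exactly the online objective~\eqref{eq:online-obj} restricted to the event $\Rk = S_j$. For the inductive step, let $S_{j+1} \le k$ be the first reception after $S_j$: conditioning on $I_{S_{j+1}}^+$ and using the induction hypothesis gives $\expect{\tpolicy}{x_k^2 \mid I_{S_{j+1}}^+} \le \max\{c^{2(k-S_{j+1})}x_{S_{j+1}}^2, B\}$; then one takes $\expect{\tpolicy}{\cdot \mid I_{S_j}^+}$, uses the online objective~\eqref{eq:online-obj} at the reception time $S_{j+1}$ (where $R_{S_{j+1}} = S_j$) to replace $x_{S_{j+1}}^2$ by $\max\{c^{2(S_{j+1}-S_j)}x_{S_j}^2, B\}$, and collapses the nested benchmarks via the telescoping identity $\max\{c^{2m}\max\{c^{2\ell}u, B\}, B\} = \max\{c^{2(m+\ell)}u, B\}$ for $u \ge 0$ and $m,\ell \ge 0$, which is valid because $c < 1$ forces $c^{2m}B \le B$. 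Iterating from $j = N$ down to $j = 0$ leaves only $\max\{c^{2(k-S_0)}x_{S_0}^2, B\} = \max\{c^{2k}x_0^2, B\}$, which is the offline objective~\eqref{eq:original-control-objective}.

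I expect the main obstacle to be the measure-theoretic bookkeeping forced by the reception times $S_j$ being random: the online objective~\eqref{eq:online-obj} is stated at a fixed timestep (with the last-reception index absorbed into the conditioning $\sigma$-algebra), so invoking it "at the random time $S_{j+1}$" must be made precise by decomposing over the events $\{S_{j+1} = s\}$ — equivalently, conditioning on the event that the first reception after $S_j$ lands at $s$ — and verifying these events are measurable with respect to the relevant stopped information. A related subtlety, and the reason the online objective is imposed at every $k$ rather than only along a fixed reception schedule, is that the outer $\max\{\cdot, B\}$ cannot simply be moved outside a conditional expectation; it has to be controlled through the benchmark-centered quantity $x_{S_{j+1}}^2 - \max\{c^{2(S_{j+1}-S_j)}x_{S_j}^2, B\}$ before the telescoping identity is applied. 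Once the recursion is arranged with this care, the conclusion follows by a routine induction, so I would keep the write-up terse and refer to the parallel argument in~\cite{PT-MF-JC:2018-tac}.
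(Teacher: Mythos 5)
You should first note that the paper does not actually prove this lemma: it is imported verbatim from Lemma~III.1 of~\cite{PT-MF-JC:2018-tac}, with only a short paragraph arguing that the argument carries over to the action-dependent Markov channel because the online objective constrains the performance function only over inter-reception intervals. So your reconstruction is being measured against what any such proof must contain, and its skeleton --- tower property, induction over the reception times $S_j$, and the telescoping identity $\max\{c^{2m}\max\{c^{2\ell}u,B\},B\}=\max\{c^{2(m+\ell)}u,B\}$ --- is indeed the natural one. You also correctly flag the two danger points. The problem is that the central inductive step is never actually carried out, and as you describe it, it is invalid.

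Concretely: after taking $\expect{\tpolicy}{\cdot\,\vert\,I_{\Sj}^+}$ of the induction hypothesis you face $\expect{\tpolicy}{\max\{c^{2(k-\Sjnext)}x_{\Sjnext}^2,B\}\,\vert\,I_{\Sj}^+}$, and since $y\mapsto\max\{c^{2(k-s)}y,B\}$ is convex, Jensen gives $\expect{}{\max\{Y,B\}}\geq\max\{\expect{}{Y},B\}$ --- the wrong direction. You therefore cannot ``replace $x_{\Sjnext}^2$ by $\max\{c^{2(\Sjnext-\Sj)}x_{\Sj}^2,B\}$'' inside the max merely because that quantity upper-bounds the conditional mean. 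This is not bookkeeping, and it cannot be repaired from the online objective alone: take $x_0=0$, $c^2=\tfrac12$, a sure reception at $k=1$, and let $x_1^2$ equal $4B$ with probability $\tfrac14$ and $0$ otherwise, with $\expect{}{x_2^2\,\vert\,I_1^+}=\max\{c^2x_1^2,B\}$. Both online constraints hold (the first with $\expect{}{x_1^2\,\vert\,I_0^+}=B$), yet $\expect{}{x_2^2\,\vert\,I_0^+}=\tfrac54 B>B=\max\{c^4x_0^2,B\}$. Such a process is of course not realizable by the actual plant, which is exactly the point: the implication holds only because of the specific conditional second-moment structure of~\eqref{eq:composite_evol} between receptions, which the cited proof exploits and which your argument never invokes. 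A secondary unaddressed issue is that the online objective conditions on $\{R_s=\Sj\}$ (no reception in the open interval $(\Sj,s)$), whereas your step needs a bound conditioned on $\{\Sjnext=s\}$, which additionally conditions on a reception occurring at $s$ --- an event correlated with $x_s$ through the transmission policy. Deferring both issues to ``routine induction'' and a terse reference to~\cite{PT-MF-JC:2018-tac} leaves the proof with a genuine gap.
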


Note that in the control
objective~\eqref{eq:original-control-objective}, the sources of
randomness that determine the expectation are the transmission policy
$\tpolicy$, the random channel behavior and the process noise. The
transmission policy and the random channel behavior determine the
successful reception times while the process noise affects the
evolution of the performance function during the inter-reception
times. As the online objective~\eqref{eq:online-obj} is essentially a
condition on the evolution of the performance function during the
inter-reception times, Lemma~\ref{lem:online} continues to hold in the
setting of this paper.

\section{Two-Step Design of Transmission Policy}
\label{sec:twostep}

Designing a transmission policy so that the described system meets the
control objective~\eqref{eq:original-control-objective} or even the
stricter online objective~\eqref{eq:online-obj} poses many
challenges. The main challenge stems from the random packet drops,
which makes the necessity of a transmission on timestep~$k$ dependent
on future transmission decisions. Furthermore, the future evolution of
the channel state depends on all the past and current transmission
decisions. Thus, the transmission decisions $\tk$ cannot be made in a
myopic manner and instead must be made by evaluating their impact on
the channel and the control objective over a sufficiently long time
frame.  To tackle this problem, we adopt a two-step design
procedure. This general design principle is the same as
in~\cite{PT-MF-JC:2018-tac}, wherein the reader can find a more
detailed discussion about this procedure as well as its merits. We now
describe the two steps of the design procedure.

In the first step, for each timestep $k$, we consider a family of
\emph{nominal policies} with \emph{look-ahead parameter
  $D \in \intpos$}. A nominal policy with parameter $D$ involves a
`hold-off' period of $D$ timesteps from $k$ to $k+D-1$ during which
$\tk=0$, and then there is perpetual transmission, that is $\tk=1$ for
all timesteps after $k+D-1$. Thus, letting $\tkd$ be the nominal
policy with parameter $D$, we can formally express it as
\begin{equation}
  \tkd: t_i=
  \begin{cases}
    0,& \text{if } i \in \{ k,k+1,\cdots,k+D-1 \}\\
    1,& \text{for } i \geq k+D .
  \end{cases}
  \label{eq:nominal-policy}
\end{equation}

In the second step of the design procedure, we construct the
event-triggered policy, $\etpol$, using the nominal policies as
building blocks. Given~\eqref{eq:nominal-policy}, one can reason that
if the nominal policy with parameter $D \in \intpos$ satisfies the
online objective from the current timestep $k$, then a transmission on
the current timestep is not necessary to meet the online
objective. Further, if the online objective cannot be met from
timestep~$k$ using the nominal policy $\tkd$ then it may be necessary
to transmit on timestep $k$. This forms the basis for the construction
of the event-triggered policy, which we detail next.

First, we need a method to check if the nominal policy $\tkd$
satisfies the online objective from timestep~$k$. For this, we define
the \emph{look-ahead function}, $\gdk$, as the expected value of the
performance function $h_k$ at the next successful reception timestep
$k=\Sjnext$ under the nominal policy, that is,
\begin{equation}
  \label{eq:lookahead_crit_initial_def}
  \gdk \ldef \expect{\tkd}{h_{\Sjnext}\,\vert\,\Ik, \Sj =
    \Rk} .
\end{equation}
We can evaluate $\gdk$ as a total expectation, over all possible
values of $\Sjnext$, as
\begin{align}
  &\gdk = \notag \\
  &\sum_{w=D}^{\infty} \expect{\tkd}{h_{\Sjnext}\,\vert\,\Ik, \Sj =
    \Rk, \Sjnext = k+w} \Om{D}(w,\pk) ,
\label{eq:gdk-expand}
\end{align}
where $\Om{D}(w,\mb{p})$ is the probability of the event that the
first successful reception after timestep $k$ is at timestep $k+w$
under the nominal policy $\tkd$ and given $\mb{p}$, the probability
distribution of the channel state at time $k$, conditioned on the
information at time $\Rk$. Formally,
\begin{align}
  \Omega_{D}(w,\mb{p}) \ldef \prob [ \Sjnext = k+w \,|\, \tpolicy
  = \tkd, \pk = \mb{p}, \Sj = \Rk ] .
    \label{eq:omega_prob_def}
\end{align}
The closed form of $\Om{D}(w,\mb{p})$ is given as follows.
\begin{align}
  &\Omega_{D}(w,\mb{p}) = \dt(\ponee)^{(w-D)}
    \pzero^{(D)} \mb{p} ,
    \label{eq:omega-prob-closed-form}
\end{align}
where $\mb{E}$ is the diagonal matrix with elements of $\mb{e}$ on its
main diagonal. The explanation of~\eqref{eq:omega-prob-closed-form} is
as follows - the probability vector $\mb{p}$, when left-multiplied by
$\pzero^{(D)}$ provides the probability vector of the channel state
immediately after the hold-off period, which is of $D$ timesteps. The
said vector when left-multiplied by $(\ponee)^{(w-D)}$ provides the
probabilities of, subsequent to the hold-off period, making a
transmission attempt $(w-D)$ times successively but failing to achieve
reception on every attempt. Finally, left-multiplication by $\dt$
gives the probability of finally having a successful reception on the
$\tth{(k+w)}$ timestep. Thus~\eqref{eq:omega-prob-closed-form} is the
closed form of $\Omega_{D}(w,\mb{p})$ defined
in~\eqref{eq:omega_prob_def}.

\subsection{The Event-Triggered Policy}

We now describe the event-triggered policy. The main idea behind the
proposed event-triggered policy is the following. A negative sign of
the look-ahead function~$\gdk$ indicates that it is not ``necessary''
to transmit on timestep $k$ as there exists a transmission sequence
(given by the nominal policy) that meets the objective at least on the
next random reception timestep. However, if the sign of $\gdk$ is
non-negative, it means that the sensor cannot afford to hold off
transmission for $D$ timesteps from the current timestep $k$, and
still ensure that the online objective is not violated on some future
timestep. In the proposed event-triggered transmission policy, the
sensor evaluates $\gdk$ at every timestep $k$, and when it turns
nonnegative the sensor keeps transmitting on every timestep until a
successful reception occurs, and then the sensor again waits for
$\gdk$ to turn non-negative. The event-triggered transmission policy
may be described formally as follows.
\begin{equation}
  \label{eq:et_policy}
  \etpol : \tk = 
\begin{cases}
  0, & \text{if } k \in \{ R_k + 1, \cdots, \tau_k - 1 \}\\
  1, & \text{if } k \in \{ \tau_k, \cdots, Z_k \} ,
\end{cases}
\end{equation}
where $\tau_k$ is the first timestep after $R_k$ when $\gdk \geq 0$
and $Z_k$ is the first timestep, after $\Rk$, on which there is a
successful reception. Thus, formally,
\begin{align*}
  \tau_k & \ldef \min\{ m > \Rk \, : \, \G{m}{D} \geq 0\} ,
  \\
  Z_k & \ldef \min\{ m > \Rk \, : \, R_m^+ = m \} .
\end{align*}
Note that the event-triggered policy is described recursively in
terms of $R_k$, the latest reception time before $k$, and the
look-ahead function~$\gdk$. As a result, the policy
in~\eqref{eq:et_policy} is valid over the entire infinite time
horizon. In the analysis of the policy~\eqref{eq:et_policy} in the
sequel, it is useful to refer to the $\tth{j}$ reception time, denoted
by $\Sj$. Similarly, we let
\begin{equation*}
  T_j \ldef \min\{ m > \Sj \, : \, \G{m}{D} \geq 0\} .
\end{equation*}
So, if $\Sj = R_k$ then $T_j = \tau_k$ and $\Sjnext = Z_k$.

\section{Implementation and Performance Guarantees}
\label{sec:PerformanceGuarantee}

In this section, we describe the implementation details of the
proposed event-triggered policy, and analyze the system under this
policy through several intermediate results. At the end of the
section, we provide sufficient conditions on the ultimate bound $B$
and the look-ahead parameter $D$ such that the system meets the online
objective (and hence the offline objective) under the event-triggered
policy.

\subsection{Closed Form Expression of the Look Ahead Criterion}

For implementation of the event-triggered policy~\eqref{eq:et_policy},
we need an easy method to compute the look-ahead function $\gdk$. In
particular, we provide here a closed form expression of the look-ahead
function. We begin by expanding the expectation term
in~\eqref{eq:gdk-expand} as follows~\cite{PT-MF-JC:2016-allerton}
\begin{align}
  &\expect{}{h_{\Sjnext} \, \vert \, I_k, \Sj = \Rk, \Sjnext=k+w} =
    \notag\\
  &\abar^{2w}x_k^2 +2\abar^w(a^w-\abar^w)x_kz_k +
    (a^{2w}-2a^w\abar^w + \abar^{2w})z_k^2\notag\\
  & +\bM (a^{2w}-1) -\max \{ c^{2w} c^{2(k-\Rk)} x^2_{\Rk},B \} .
		\label{eq:expect-h}
\end{align}
From~\eqref{eq:gdk-expand} and~\eqref{eq:expect-h}, it is evident that
convergence of $\gdk$ requires the convergence of infinite series of
the form
\begin{align}
  g_D(b,\mb{p}) & \ldef \sum_{w=D}^\infty b^{w} \Om{D}(w,\mb{p})
                  \notag
  \\
                & = b^D \sum_{w=D}^\infty b^{(w-D)} \dt (\ponee)^{(w-D)}
                  \pzero^{(D)}\mb{p} ,
\label{eq:gD-def}
\end{align}
with $\mb{p} \in \svecspace$, and $D \in \intpos$ and for values of $b$
equal to $\abar^2$, $c^2$, $a^2$, $\abar a$ and $1$, which satisfy
\begin{equation}
  0 < \abar^2 < c^2 < 1 < a^2, \quad | \abar a | < a^2 .
  \label{eq:param-ordering}
\end{equation}
Each of the terms~$g_D(b,\mb{p})$ is an infinite matrix geometric
series. The criteria for convergence and the closed form of
$g_D(b,\mb{p})$ for these values of $b$ would allow us to determine
the same for $\gdk$. Thus, our first aim is to examine the condition
for convergence of~$g_D(b,\mb{p})$. The following lemma establishes
the necessary and sufficient condition for a matrix geometric series
to converge. Although this is a well known result, we could not find a
concise proof in the literature and hence we present a proof ourselves
in Appendix~\ref{sec:aux-proofs}.

\begin{lemma}[Convergence of Matrix Geometric Series]
	Consider a square matrix $\mb{K}$. The matrix geometric series
  $\sum_{w=0}^\infty \mb{K}^w$ converges if and only if
  $\specrad{\mb{K}}<1$. Further, if $\specrad{\mb{K}}<1$ then the
  series converges to $(\id - \mb{K})^{-1}$. \qed
	\label{lem:matrix-geom-series}
\end{lemma}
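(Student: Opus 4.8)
The plan is to prove both directions via the Jordan form (or Schur triangularization) of $\mb{K}$, reducing the matrix statement to the scalar geometric series together with some bookkeeping for the polynomial factors that arise from nontrivial Jordan blocks. First I would establish the "if" direction. Suppose $\specrad{\mb{K}} < 1$. I would write $\mb{K} = \mb{S}\mb{J}\mb{S}^{-1}$ with $\mb{J}$ in Jordan normal form, so that $\mb{K}^w = \mb{S}\mb{J}^w\mb{S}^{-1}$ and $\sum_{w=0}^N \mb{K}^w = \mb{S}\big(\sum_{w=0}^N \mb{J}^w\big)\mb{S}^{-1}$. Since $\mb{J}$ is block-diagonal it suffices to show each $\sum_{w=0}^N \mb{J}_\ell^w$ converges, where $\mb{J}_\ell$ is a single Jordan block with eigenvalue $\lambda$, $|\lambda| < 1$. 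The $(r,s)$ entry of $\mb{J}_\ell^w$ is a binomial coefficient times $\lambda^{w-(s-r)}$, i.e. bounded in magnitude by a fixed polynomial in $w$ times $|\lambda|^w$; since $|\lambda| < 1$, the series $\sum_w w^m |\lambda|^w$ converges for every fixed $m$ (ratio test), so each entry converges absolutely and hence $\sum_{w=0}^\infty \mb{J}_\ell^w$ converges. Conjugating back, $\sum_{w=0}^\infty \mb{K}^w$ converges.

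Next I would identify the limit. From the partial-sum identity $(\id - \mb{K})\sum_{w=0}^N \mb{K}^w = \id - \mb{K}^{N+1}$, and noting that $\specrad{\mb{K}} < 1$ forces $\mb{K}^{N+1} \to \mathbf{0}$ (same Jordan-block estimate: every entry is bounded by a polynomial times $|\lambda|^{N+1} \to 0$) and that $\id - \mb{K}$ is invertible (since $1$ is not an eigenvalue of $\mb{K}$), I can take $N \to \infty$ to conclude $(\id - \mb{K})\sum_{w=0}^\infty \mb{K}^w = \id$, hence the sum equals $(\id - \mb{K})^{-1}$.

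For the "only if" direction, I would argue by contrapositive: assume $\specrad{\mb{K}} \geq 1$ and show the series diverges. A necessary condition for convergence of $\sum_w \mb{K}^w$ is that its terms tend to zero, i.e. $\mb{K}^w \to \mathbf{0}$. Pick an eigenvalue $\lambda$ with $|\lambda| \geq 1$ and a corresponding eigenvector $\mb{x} \neq \mathbf{0}$; then $\mb{K}^w \mb{x} = \lambda^w \mb{x}$, whose norm is $|\lambda|^w\|\mb{x}\| \geq \|\mb{x}\| \not\to 0$, so $\mb{K}^w \not\to \mathbf{0}$ and the series cannot converge. This completes the equivalence.

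I expect the only mildly delicate point to be the bookkeeping for Jordan blocks of size greater than one — making precise that the entries of $\mb{J}_\ell^w$ grow at most polynomially in $w$ and invoking that $\sum_w p(w)|\lambda|^w < \infty$ whenever $|\lambda| < 1$. Everything else (the telescoping identity for partial sums, non-singularity of $\id - \mb{K}$, the term-test for divergence) is routine. If one prefers to avoid Jordan form entirely, an alternative is to use that for any $\epsilon > 0$ there is a submultiplicative matrix norm with $\|\mb{K}\| \leq \specrad{\mb{K}} + \epsilon$; choosing $\epsilon$ so that $\|\mb{K}\| < 1$ gives $\sum_w \|\mb{K}^w\| \leq \sum_w \|\mb{K}\|^w < \infty$, yielding absolute convergence directly, and the limit is identified by the same telescoping argument. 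I would likely present the norm-based argument for brevity and relegate the Jordan estimate to a remark.
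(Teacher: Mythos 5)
Your proof is correct and follows essentially the same route as the paper's: the telescoping identity $(\id-\mb{K})\sum_{w=0}^{N}\mb{K}^w = \id - \mb{K}^{N+1}$ together with invertibility of $\id-\mb{K}$ for the convergence direction, and an eigenvector associated with an eigenvalue of modulus at least one for the divergence direction (the paper applies the series to the eigenvector and notes the scalar series diverges, whereas you invoke the term test --- a cosmetic difference). The only substantive addition is your Jordan-block (or equivalent norm-based) justification that $\specrad{\mb{K}}<1$ implies $\mb{K}^{N}\to\mb{0}$, a standard fact the paper's proof simply asserts.
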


We can obtain a closed form expression of $g_D(b,\mb{p})$ defined
in~\eqref{eq:gD-def} by first expressing it as
\begin{equation*}
  g_D(b,\mb{p}) = b^ D \dt \left[\sum_{w=0}^\infty
    (b\mb{\ponee})^w \right] \pzero^{(D)}\mb{p} .
\end{equation*}
and then applying Lemma~\ref{lem:matrix-geom-series}. In particular if
$\specrad{b\mb{\ponee}} < 1$ then we obtain
\begin{equation} \label{eq:gD-closed-form}
  g_D(b,\mb{p}) = b^D \dt \infsum{b} \mb{P}_0^{(D)} \mb{p} .
\end{equation}
In the following result, we apply Lemma~\ref{lem:matrix-geom-series}
to provide a necessary and sufficient condition for $\gdk$ to be
well-defined. Its proof appears in Appendix~\ref{sec:aux-proofs}.

\begin{lemma}[Necessary and sufficient condition for the existence of
  $\gdk$]
	$\gdk$ converges for all values of the probability distribution
  vector $\pk$ \emph{if and only if} $a^2\specrad{\ponee} < 1$. \qed
	\label{lem:gdk-convergence}
\end{lemma}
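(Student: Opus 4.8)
The plan is to reduce the convergence of $\gdk$ to that of a single matrix geometric series and then invoke Lemma~\ref{lem:matrix-geom-series}.

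First I would substitute \eqref{eq:expect-h} into \eqref{eq:gdk-expand} and collect the summand $E(w)\,\Om{D}(w,\pk)$ by powers of $w$. Using $z_k=x_k-\hx_k$, this gives
\begin{align*}
  \gdk &= \hx_k^2\, g_D(\abar^2,\pk) + 2z_k\hx_k\, g_D(\abar a,\pk)
         + (z_k^2+\bM)\, g_D(a^2,\pk) \\
       &\quad - \bM\, g_D(1,\pk)
         - \sum_{w=D}^\infty \max\{ c^{2w}c^{2(k-\Rk)}x_{\Rk}^2, B\}\,\Om{D}(w,\pk).
\end{align*}
Since $0\le c^{2w}c^{2(k-\Rk)}x_{\Rk}^2\le c^{2(k-\Rk)}x_{\Rk}^2$, $\Om{D}(w,\pk)\ge0$, and $\sum_{w}\Om{D}(w,\pk)\le1$, the last sum is bounded by $\max\{c^{2(k-\Rk)}x_{\Rk}^2,B\}$ and always converges (equivalently, split it into the finitely many $w$ with $c^{2w}c^{2(k-\Rk)}x_{\Rk}^2\ge B$ plus a $B$-weighted tail). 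By the ordering \eqref{eq:param-ordering}, $a^2$ strictly dominates each of $\abar^2$, $|\abar a|$ and $1$, so the coefficient $E(w)$ of $\Om{D}(w,\pk)$ satisfies $E(w)/a^{2w}\to z_k^2+\bM\ge\bM>0$. Hence $|E(w)|\le C a^{2w}$ for a constant $C$, while $E(w)\ge\tfrac12(z_k^2+\bM)a^{2w}>0$ for all large $w$; as $\Om{D}(w,\pk)\ge0$, this forces $\gdk$ to converge if and only if $g_D(a^2,\pk)=\sum_{w\ge D}a^{2w}\Om{D}(w,\pk)$ converges, and this equivalence is uniform over the remaining quantities $x_k,z_k,x_{R_k},k-\Rk$ in \eqref{eq:expect-h}.

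Next I would use \eqref{eq:omega-prob-closed-form} to write
\[
  g_D(a^2,\pk) = a^{2D}\,\dt\Bigl(\sum_{u=0}^\infty (a^2\ponee)^u\Bigr)\pzero^{(D)}\pk .
\]
If $a^2\specrad{\ponee}<1$ then $\specrad{a^2\ponee}=a^2\specrad{\ponee}<1$, so Lemma~\ref{lem:matrix-geom-series} gives $\sum_{u\ge0}(a^2\ponee)^u=\infsum{a^2}$ and $g_D(a^2,\pk)$ is finite for every $\pk\in\svecspace$; this settles sufficiency. For necessity I would argue the contrapositive: assuming $a^2\specrad{\ponee}\ge1$, I must produce a $\pk$ with $g_D(a^2,\pk)=\infty$. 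Since $a^2\ponee$ is entrywise nonnegative, the Perron--Frobenius theorem makes $\lambda\ldef\specrad{a^2\ponee}\ge1$ an eigenvalue with a nonnegative right eigenvector $\mb{u}\ge0$, $\mb{u}\neq0$, so $(a^2\ponee)^u\mb{u}=\lambda^u\mb{u}$. Choosing $\pk$ so that $\pzero^{(D)}\pk$ has a nonzero component along $\mb{u}$ — possible because $\pzero^{(D)}$ maps $\svecspace$ into $\svecspace$ and, under the standing non-degeneracy of the channel (reception is possible, i.e.\ $\dt\mb{u}>0$ whenever $\mb{u}\ge0$, $\mb{u}\neq0$), the left multiplication by $\dt$ does not annihilate that component — the partial sums of $g_D(a^2,\pk)$ grow at least like $\lambda^u\ge1$, so the series diverges.

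I expect the necessity direction to be the main obstacle. The subtlety is that one is not summing the bare matrix series $\sum_u(a^2\ponee)^u$ but its image under two ``lossy'' linear maps, $\pzero^{(D)}$ on the right (the channel distribution lives in $\pzero^{(D)}(\svecspace)$, not the full simplex) and $\dt$ on the left (only reception-weighted coordinates are counted), so one must guarantee that a genuinely divergent direction survives both. A clean argument therefore rests on Perron--Frobenius theory for nonnegative matrices together with an explicit non-degeneracy assumption on the channel, which I would state up front. By contrast, the reduction in the second paragraph is routine bookkeeping with \eqref{eq:param-ordering}, and sufficiency is immediate from Lemma~\ref{lem:matrix-geom-series}.
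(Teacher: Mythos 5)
Your proposal is correct and follows the same skeleton as the paper's proof: identify the summands $g_D(b,\pk)$ with $b\in\{\abar^2, a\abar, a^2, c^2, 1\}$, observe via~\eqref{eq:param-ordering} that $a^2$ dominates so that $\specrad{a^2\ponee}$ is the binding spectral radius, and invoke Lemma~\ref{lem:matrix-geom-series}. The difference is in how much of the necessity direction gets justified. The paper's proof is three sentences long and passes directly from ``the matrix series $\sum_w (a^2\ponee)^w$ diverges'' to ``$\gdk$ diverges,'' which silently assumes two things you correctly isolate: (i) that the divergent contributions of the several $g_D(b,\cdot)$ terms cannot cancel — your observation that the coefficient of $a^{2w}$ in~\eqref{eq:expect-h} collapses to $z_k^2+\bM>0$ (with $\bM = M(a^2-1)^{-1}$) and that $\Om{D}(w,\pk)\ge 0$ settles this; and (ii) that sandwiching the matrix series between $\dt$ on the left and $\pzero^{(D)}\pk$ on the right does not annihilate the divergent direction, for which you supply a Perron--Frobenius argument and, importantly, note that it needs a non-degeneracy hypothesis on the channel (e.g.\ $\dt$ not vanishing on the Perron cone; in the extreme case $\mb{e}=\onevec$ one has $\dt=\mb{0}$, every $\Om{D}(w,\pk)=0$, and $\gdk$ converges trivially even though $a^2\specrad{\ponee}=a^2>1$, so the ``only if'' as literally stated fails). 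The paper states no such hypothesis and its proof does not engage with this point, so your version is strictly more careful; the only cost is that your necessity argument is longer and the phrase ``nonzero component along $\mb{u}$'' should be made precise (e.g.\ by lower-bounding $\pzero^{(D)}\pk$ entrywise by a positive multiple of the Perron eigenvector, using nonnegativity of all terms).
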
 

We now proceed to give a closed form expression of the look-ahead
function $\gdk$ in the following lemma. Its proof is presented in
Appendix~\ref{sec:aux-proofs}.

\begin{lemma}[Closed form of the look-ahead function]
  \label{lem:jdsjclosed}
  Suppose that $a^2\specrad{\ponee} < 1$. The following is a
  closed-form expression of the look-ahead function $\gdk$.
  \begin{align*}
    \gdk = %
    & g_D(\abar^2,\pk)x_k^2 + 2\left(g_D(a\bar{a},\pk)-g_D
      (\bar{a}^2,\pk)\right)x_kz_k+\notag\\
    & \left(g_D(a^2,\pk) + g_D(\bar{a}^2,\pk) -
      2g_D(a\bar{a},\pk) \right) z_k^2 + \notag\\
    & \bM \left( g_D(a^2,\pk)-g_D
      (1,\pk)\right) - \left( B f_D(1,\pk)
      \right. + \notag\\
    & \left. N_k\left[g_D(c^2,\pk)
      -f_D(c^2,\pk) \right] \right)
  \end{align*}
  where $\bM \ldef M(a^2-1)^{-1}$, $N_k \ldef c^{2(k-\Rk)} x^2_{\Rk}$,
  the closed form of the function~$g_D(b,\mb{p})$ is given
  in~\eqref{eq:gD-closed-form}, while $f_D(b,\mb{p})$ is given by
  \begin{align*}
    f_D(b,\mb{p}) &\ldef b^{\mu} \dt (\mb{P}_1\mb{E})^{(\mu - D)}
                    \infsum{b} \mb{P}_0^{(D)}\mb{p} .
  \end{align*}
  Finally, $\mu$ is defined as follows
  \begin{align}
    \mu &\ldef \max \left\{ D, \left\lceil \frac{\log
            (x^2_{\Rk}/B)}{\log(1/c^2)} \right\rceil -(k-\Rk)
            \right\} .
            \label{eq:qkD}
  \end{align}
  \qed
\end{lemma}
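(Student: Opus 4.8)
The plan is to compute the total expectation in~\eqref{eq:gdk-expand} by substituting the per-$w$ conditional expectation~\eqref{eq:expect-h} and then summing each resulting series against the weights $\Om{D}(w,\pk)$. First I would split~\eqref{eq:expect-h} into the terms that are ``smooth'' in $w$ (the $\abar^{2w}$, $a^w\abar^w$, $a^{2w}$ and constant $-1$ contributions) and the term involving $\max\{c^{2w}N_k, B\}$. For the smooth terms, multiplying by $\Om{D}(w,\pk)$ and summing over $w \geq D$ produces exactly the series $g_D(b,\pk)$ of~\eqref{eq:gD-def} evaluated at $b \in \{\abar^2,\ a\abar,\ a^2,\ 1\}$; collecting the $x_k^2$, $x_k z_k$, $z_k^2$ and $\bM$ coefficients gives the first three lines of the claimed expression, using the closed form~\eqref{eq:gD-closed-form} which is available because $a^2\specrad{\ponee}<1$ implies $\specrad{b\,\ponee}<1$ for all of these $b$ (here the hypothesis is essential, and I would invoke Lemma~\ref{lem:gdk-convergence} / Lemma~\ref{lem:matrix-geom-series} exactly as in the derivation of~\eqref{eq:gD-closed-form}).

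The main work is the $-\max\{c^{2w}N_k, B\}$ term, and this is where the quantity $\mu$ enters. Since $c^2 \in (0,1)$, the sequence $c^{2w}N_k$ is decreasing in $w$, so there is a threshold index at which $c^{2w}N_k$ drops below $B$: for $w$ below the threshold the max equals $c^{2w}N_k = N_k c^{2w}$, and for $w$ at or above it the max equals $B$. Solving $c^{2w}x_{\Rk}^2 = B$ for the crossover in terms of the shifted index (recall $N_k = c^{2(k-\Rk)}x_{\Rk}^2$, so $c^{2w}N_k = c^{2(w + k - \Rk)}x_{\Rk}^2$) gives precisely the $\lceil \log(x_{\Rk}^2/B)/\log(1/c^2)\rceil - (k-\Rk)$ appearing in~\eqref{eq:qkD}, and taking the max with $D$ accounts for the fact that the sum starts at $w = D$ (if the crossover would occur before $D$, the entire sum is in the ``$=B$'' regime). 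Thus I would write
\[
  \sum_{w=D}^{\infty} \max\{c^{2w}N_k,B\}\,\Om{D}(w,\pk)
  = N_k \sum_{w=D}^{\mu-1} c^{2w}\Om{D}(w,\pk) + B\sum_{w=\mu}^{\infty}\Om{D}(w,\pk),
\]
and then re-expand each finite/infinite tail. The infinite tail $\sum_{w=\mu}^\infty \Om{D}(w,\pk)$ and the analogous $\sum_{w=\mu}^\infty c^{2w}\Om{D}(w,\pk)$ are geometric series starting at index $\mu$ rather than $D$; using~\eqref{eq:omega-prob-closed-form} and factoring out the extra $(\ponee)^{\mu-D}$ produces exactly the function $f_D(b,\pk) = b^{\mu}\dt(\mb{P}_1\mb{E})^{(\mu-D)}\infsum{b}\mb{P}_0^{(D)}\mb{p}$ for $b=1$ and $b=c^2$. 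Finally, rewriting $\sum_{w=D}^{\mu-1} c^{2w}\Om{D} = \sum_{w=D}^{\infty} - \sum_{w=\mu}^\infty = g_D(c^2,\pk) - f_D(c^2,\pk)$ and $\sum_{w=\mu}^\infty \Om{D}(w,\pk) = f_D(1,\pk)$ (since $g_D(1,\pk)$ minus the head up to $\mu-1$ equals $f_D(1,\pk)$, and here the head is absorbed) assembles the last two lines, with the overall minus sign from $-\max\{\cdot\}$ distributed as $-\big(Bf_D(1,\pk) + N_k[g_D(c^2,\pk) - f_D(c^2,\pk)]\big)$.

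The step I expect to be the main obstacle is getting the index bookkeeping for $\mu$ exactly right — in particular verifying that the ceiling-and-max formula~\eqref{eq:qkD} correctly identifies the first $w$ at which $B$ dominates $c^{2w}N_k$, handling the boundary case $c^{2w}N_k = B$ consistently with which branch of the max is taken in~\eqref{eq:expect-h}, and confirming that when the naive crossover index is $\le D$ the formula $\mu = D$ does give $\sum_{w=D}^{\mu-1}(\cdots) = 0$ so that the $N_k$-term contributes only through $g_D(c^2,\pk) - f_D(c^2,\pk)$ without an extra head. Everything else is routine rearrangement of absolutely convergent matrix geometric series, justified termwise by Lemma~\ref{lem:matrix-geom-series} under the standing hypothesis $a^2\specrad{\ponee}<1$, which dominates all the relevant spectral radii since $b \le a^2$ for every $b$ in~\eqref{eq:param-ordering}.
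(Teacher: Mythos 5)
Your proposal is correct and follows essentially the same route as the paper's proof: the smooth terms are read off directly as $g_D(b,\pk)$ for $b\in\{\abar^2,a\abar,a^2,1\}$, and the $\max\{c^{2w}N_k,B\}$ term is handled by identifying $\mu$ as the smallest $w\geq D$ with $c^{2w}N_k\leq B$ and recognizing the tail sums $\sum_{w\geq\mu}b^w\Om{D}(w,\pk)$ as $f_D(b,\pk)$. The paper organizes the split as the full sum $g_D(c^2,\pk)N_k$ plus a correction $\sum_{w\geq\mu}(B-c^{2w}N_k)\Om{D}(w,\pk)$ rather than your head-plus-tail decomposition, but this is algebraically identical, and your index bookkeeping for $\mu$ (including the $\mu=D$ degenerate case) matches the paper's.
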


Note that the closed form of $\gdk$ is a third-degree polynomial of
the plant state $x_k$, error $z_k$, and individual elements of $\pk$,
and is amenable for online computation. Furthermore, note that the
look-ahead function $\gdk$ possesses a mathematical structure
comprising of a linear operator with unit dimensional rowspace acting
on the stochastic vector $\pk$.

\subsection{Necessary Condition on the Ultimate Bound $B$}

We now seek a necessary condition on the ultimate bound $B$ for there
to exist a transmission policy that satisfies the online objective. To
this end, we introduce the \emph{open loop performance function},
$H(w,y)$, which we define as the expectation of the performance
function $h_{\Sjnext}$ conditioned upon $I_{\Sj}^+$ and the event that
$\Sjnext=\Sj+w$ and $x_{\Sj}^2 = y$, that is,
\begin{equation}
  H(w, y) \ldef
  \expect{}{h_{\Sjnext}\,\vert\,I_{\Sj}^+, x_{\Sj}^2 = y, \Sjnext=\Sj+w}.	
  \label{eq:OLPF_definition}
\end{equation}
Note that $H(w, x_{\Sj}^2)$ is very similar to~\eqref{eq:expect-h}
except that $H$ is conditioned upon $I_{\Sj}^+$ and defined for the
special case of $k = \Sj$. Thus, the closed form of $H(w, x_{\Sj}^2)$
may be obtained from~\eqref{eq:expect-h} by replacing $k$ with $\Sj$,
$x_k$ with $x_{\Sj}$ and $z_k$ with $z_{\Sj}^+ = 0$ and $R_k$ with
$R_{\Sj}^+ = \Sj$. Hence we have
\begin{equation}
  H(w,x^2_{\Sj}) = \abar^{2w}x^2_{\Sj} + \bar{M}(a^{2w}-1)
  -\max\{c^{2w}x^2_{\Sj},B\}.
  \label{eq:H-expr}
\end{equation}
Note that $H(w,x^2_{\Sj}) < 0$ indicates that given the information
$I_{\Sj}^+$, the online objective is expected to be satisfied on
timestep $\Sj + w$. Conversely, a positive sign implies that the
online objective is expected to be violated on timestep $\Sj +
w$. Using this observation, we demonstrate in the following
proposition that for $B$ less than a critical $B_0$, there exists
\emph{no} transmission policy that can satisfy the online
objective. We provide its proof in Appendix~\ref{sec:aux-proofs}.

\begin{prop}[Necessary condition on the ultimate bound for meeting the
  online objective]
	If $B<B_0\ldef\frac{\bM\log(a^2)}{\log(c^2/\abar^2)}$ then no
  transmission policy satisfies the online objective. \qed
	\label{prop:open_loop_performance_function}
\end{prop}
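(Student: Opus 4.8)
The plan is to show that if $B < B_0$ then for \emph{every} choice of $w \geq 1$ there is some initial squared state $y = x_{\Sj}^2$ making $H(w,y) \geq 0$, and then to argue that this forces a violation of the online objective no matter what the policy does. The key observation is that the online objective~\eqref{eq:online-obj} must hold for \emph{all} timesteps $k$, in particular at $k = \Sjnext = \Sj + w$ for the actual (random) inter-reception gap $w$; since a policy cannot control which $w$ occurs (the channel can, with positive probability, deliver the next packet at \emph{any} future timestep once transmissions resume), the quantity $\expect{\tpolicy}{h_{\Sjnext}^2 \mid I_{\Sj}^+}$ cannot be made nonpositive if $H(w, x_{\Sj}^2) > 0$ for the realized $w$ with positive probability. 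So the heart of the argument is purely about the deterministic function $H(w,y)$ in~\eqref{eq:H-expr}.

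First I would analyze $H(w,y)$ as a function of $y$ for fixed $w \geq 1$. From~\eqref{eq:H-expr}, $H(w,y) = \abar^{2w} y + \bM(a^{2w}-1) - \max\{c^{2w} y, B\}$. On the branch where $c^{2w} y \leq B$ (small $y$), $H(w,y) = \abar^{2w} y + \bM(a^{2w}-1) - B$, which is increasing in $y$ and attains its maximum on this branch at $y = B/c^{2w}$, giving value $(\abar^2/c^2)^w B + \bM(a^{2w}-1) - B$. On the branch $c^{2w} y \geq B$, $H(w,y) = (\abar^{2w} - c^{2w}) y + \bM(a^{2w}-1)$; since $\abar^2 < c^2 < 1$ the coefficient $\abar^{2w} - c^{2w}$ is negative, so $H$ is decreasing there and again maximized at the kink $y = B/c^{2w}$. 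Hence $\sup_{y \geq 0} H(w,y) = H(w, B/c^{2w}) = (\abar^2/c^2)^w B - B + \bM(a^{2w}-1)$. Next I would find when this supremum is nonnegative: rearranging, we need $\bM(a^{2w}-1) \geq B\bigl(1 - (\abar^2/c^2)^w\bigr)$, i.e. $B \leq \bM \cdot \dfrac{a^{2w}-1}{1 - (\abar^2/c^2)^w}$. The plan is to show the right-hand side, minimized over $w \in \intpos$, equals (or is bounded below in the limit by) $B_0 = \bM \log(a^2)/\log(c^2/\abar^2)$. Writing $r = \abar^2/c^2 \in (0,1)$ and $s = a^2 > 1$, consider $\phi(w) = (s^w - 1)/(1 - r^w)$ for real $w > 0$; as $w \to 0^+$, both numerator and denominator vanish and L'Hôpital gives $\phi(0^+) = \log s / (-\log r) = \log(a^2)/\log(c^2/\abar^2) = B_0/\bM$. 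I would then check that $\phi$ is nondecreasing on $(0,\infty)$ (equivalently that $\log\phi$ is, which reduces to a one-variable inequality comparing $s^w/(s^w-1)$ and $r^w/(1-r^w)$ type terms), so that $\inf_{w \in \intpos} \phi(w) = \lim_{w\to 0^+}\phi(w) = B_0/\bM$, with the infimum approached but not attained. Consequently, if $B < B_0$ then $B < \bM\,\phi(w)$ for every $w \in \intpos$, so $\sup_{y} H(w,y) > 0$ for every $w$.

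To finish, fix any transmission policy $\tpolicy$ and any reception index $j$. Condition on $I_{\Sj}^+$, which fixes $x_{\Sj}$, hence $y := x_{\Sj}^2$. Choose $w^\star \in \intpos$ such that $H(w^\star, y) > 0$ — this is possible because $\sup_w \sup_y H(w,y) > 0$; more carefully, I would observe that for \emph{any} fixed $y>0$, taking $w$ large makes $\abar^{2w}y \to 0$ while $\bM(a^{2w}-1) \to \infty$ and $\max\{c^{2w}y,B\} \to B$, so $H(w,y) \to \infty$, hence $H(w^\star,y) > 0$ for all large $w^\star$. Since transmissions resume at the latest by timestep $T_j$ and from then on the packet-drop probabilities $e_i$ are strictly less than $1$ along some reachable channel states (this is where I may need a mild irreducibility/reachability caveat on $\pone$, or alternatively frame the statement as: there is \emph{no} policy guaranteeing the objective for \emph{every} channel realization), the event $\Sjnext = \Sj + w^\star$ has positive probability conditioned on $I_{\Sj}^+$; on that event $h_{\Sjnext} = H(\,\cdot\,)$-distributed with mean $H(w^\star,y)>0$, and one more step — using that conditioned further on $\Sjnext = \Sj+w^\star$ the only remaining randomness is the zero-mean process noise so $h_{\Sjnext}$ has a nondegenerate distribution with positive mean — yields $\expect{\tpolicy}{h_{\Sjnext}^2 \mid I_{\Sj}^+} > 0$, contradicting~\eqref{eq:online-obj} at $k = \Sjnext$.

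The main obstacle I anticipate is the monotonicity step for $\phi(w) = (a^{2w}-1)/(1-(\abar^2/c^2)^w)$: establishing that its infimum over positive integers is the $w\to 0^+$ limit $B_0/\bM$ requires a genuine (if elementary) calculus argument, and one must be careful that the bound is \emph{strict} ($B < B_0$ strictly) precisely because the infimum is not attained at any integer $w$. A secondary subtlety is the probabilistic coupling at the end: making rigorous that "no policy can avoid some bad $w$" needs either a reachability assumption on the channel (so that the gap $\Sjnext - \Sj$ can, with positive probability, take the required value) or a restatement of the proposition as a statement about guaranteed (worst-case over channel realizations) satisfaction of the online objective — I would check which framing matches the paper's standing assumptions and adopt that one.
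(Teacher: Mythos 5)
Your analysis of $H(w,\cdot)$ as a piecewise-linear function of $y$ maximized at the kink $y=Bc^{-2w}$ is correct, and the computation showing that $\bM(a^{2w}-1)/\bigl(1-(\abar^2/c^2)^w\bigr)\to B_0$ as $w\to 0^+$ is a nice consistency check on the constant. But this only establishes that for each $w$ there is a $w$-\emph{dependent} bad state $y=Bc^{-2w}$ with $H(w,y)>0$; a violation of the online objective requires a \emph{single} realized $x_{\Sj}^2$ for which $H(w,x_{\Sj}^2)>0$ at a timestep the objective actually constrains. Your patch --- fixing $y$ and sending $w\to\infty$ so that $\bM(a^{2w}-1)$ dominates --- proves too much: $H(w,y)\to\infty$ holds for \emph{every} $B$, so if that argument sufficed it would show that no $B$ admits a feasible policy, contradicting Theorem~\ref{th:performance_final_guarantee}. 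It does not suffice because the objective, as the paper uses it, constrains $H(k-\Sj,x_{\Sj}^2)$ only for $k$ up to the next reception and constrains $\expect{}{h_{\Sjnext}\mid I_{\Sj}^+}$ as a probability-weighted average over $w$, where large $w$ carries exponentially small weight; a policy can arrange receptions before the positive tail matters. Your closing step also leans on a reachability hypothesis on the channel that the paper never assumes. (A minor slip besides: if $\phi$ is nondecreasing on $(0,\infty)$, then $\inf_{w\in\intpos}\phi(w)=\phi(1)>B_0/\bM$, so the infimum over integers \emph{is} attained, just not at the value $B_0/\bM$.)

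The paper's proof sidesteps all of this by exhibiting a single interval of bad states: for every $y\in(B,B_0)$ --- nonempty precisely when $B<B_0$ --- it shows $H(w,y)>0$ for \emph{all} $w\in\intpos$ simultaneously, by splitting $H$ into the branches $l_1,l_2$ at $w_{**}(y)$, showing $l_1(0,y)=0$ with strictly positive initial slope (this is exactly where $y<B_0$ enters) together with quasiconvexity, and lower-bounding $l_2$ by a convex function vanishing to second order at $w=0$. Hence if $x_{\Sj}^2\in(B,B_0)$ the objective is violated already on the very next timestep, and every weighted average $\sum_{w}H(w,x_{\Sj}^2)\,\prob[\Sjnext=\Sj+w]$ is positive for any policy and any channel --- no probabilistic or structural input is needed. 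To repair your route, you must produce one $y$ that is bad for all relevant $w$ at once (any $y\in(B,B_0)$ works, or even just the $w=1$ kink $y=Bc^{-2}$ for a next-timestep violation), rather than a different $y$ for each $w$.
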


Proposition~\ref{prop:open_loop_performance_function} demonstrates
that $B>B_0$ is a \emph{necessary} condition on $B$ for a transmission
policy to satisfy the online objective. In the following subsection,
we further analyse the open-loop performance function $H(w,y)$ to find
a \emph{sufficient} criterion to check whether a given $B$ and $D$
ensure that the online objective is met under the event-triggered
policy.

\subsection{The Performance-Evaluation Function, $\jdsj$}

For the purpose of analysing system performance betweeen any two
successive reception times $\Sj$ and $\Sjnext$, we define the
\emph{performance-evaluation function}, $\jdsj$.  It is defined
similarly as $\gdk$ in~\eqref{eq:lookahead_crit_initial_def}, though
only for $k=\Sj$ (successful reception times) and conditioned upon the
information set $I^+_{\Sj}$ instead of $I_{\Sj}$. In particular, we
let
\begin{align}
  \jdsj \ldef%
  & \ \mathbb{E}_{\tpolicy_{\Sj+1}^{D-1}} \left[ h_{\Sjnext}\, \vert \,
    I_{\Sj}^+\right] = \sum_{w=D}^\infty H(w,x^2_{\Sj})
    \ti{\Omega}_D(w,\gamma_{\Sj}).
\label{eq:jdsj_infinite_sum_form}
\end{align}
Here, $\ti{\Omega}_D(w,\gamma)$ denotes the probability of getting a
successful reception $w$ timesteps after $S_j$ starting with channel
state $\gamma$ on $\Sj$ under the nominal policy
$\Mcal{T}^{D-1}_{\Sj+1}$. The purpose of the function
$\ti{\Omega}_D(w,\gamma)$ is analogous to that of $\Omega_D(w,\mb{p})$
in $\gdk$, and is formally defined as
\begin{align}
  \hspace{-0.5em} \ti{\Omega}_{D}(w,\gamma) \ldef \prob [ \Sjnext =
  \Sj+w \,|\, \tpolicy = \Mcal{T}^{D-1}_{\Sj+1},\gamma_{\Sj}=\gamma] .
  \label{eq:omega_2_prob_def}
\end{align}
Note that there are two differences between the closed forms of
$\Om{D}(w,\mb{p})$ and $\ti{\Omega}_D(w,\gamma)$. First, we use
channel state $\gamma$ instead probability vector $\mb{p}$ in the
definition of $\Omega_D(w,\gamma)$, since channel state distribution
$\mb{p}_{\Sj}=\canbasis{\gamma_{\Sj}}$ can be inferred from
$I^+_{\Sj}$ using~\eqref{eq:channel-state-prob}. Second, the
expectation in~\eqref{eq:jdsj_infinite_sum_form} is conditioned upon
nominal policy $\Mcal{T}^{D-1}_{\Sj+1}$ as opposed to the policy
$\Mcal{T}^D_{\Sj}$ in the definition of $\Mcal{G}_{\Sj}^D$
in~\eqref{eq:gdk-expand}. This is because $\gdk$ is defined
\emph{pre-transmission} for the purpose of deciding $\tk$, while
$\jdsj$ is defined \emph{post-transmission} on timestep $\Sj$ for the
purpose of convergence analysis. Therefore $\gdk$ is calculated for a
timestep with an underlying nominal policy in which $t_i=0$ for
$i\in\{k,k+1,\cdots,k+D-1\}$, while the definition of $\jdsj$ is
already conditioned upon the fact that $t_{S_j}=1$ for all $j$. The
closed form of $\ti{\Omega}_D(w,\gamma)$ can be obtained similarly to
the closed form of $\Omega_D(w,\mb{p})$, and is given as
\begin{equation}
\ti{\Omega}_D(w,\gamma)= \dt (\ponee)^{(w-D)}\pzero^{(D-1)}\pone\canbasis{\gamma}.
\label{eq:OmegaTildeClosedForm}
\end{equation}
For a well-chosen value of $B$, it can be shown that the open loop
performance function possesses the property of \emph{sign
  monotonicity}. This property is an important characteristic of
$H(w,y)$ and will prove useful in later results. 

\begin{theorem}[Sign behaviour of the open-loop performance function,
  Proposition IV.6,~\cite{PT-MF-JC:2018-tac}]
  \label{th:sign_monotonicity_H}
	There exists a $B^* \geq B_0$ with $B_0$ defined in
  Proposition~\ref{prop:open_loop_performance_function} such that if
  $B>B^*$, then $H(w,y)>0$ implies $H(s,y)>0$ for all $s\geq w$. \qed
\end{theorem}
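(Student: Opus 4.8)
\emph{Proof plan.} The plan is to work from the closed form~\eqref{eq:H-expr}, treating the gap $w$ as a real parameter in $[0,\infty)$ (the statement for integer $w$ is then a restriction), and to write $H(w,y) = G(w,y) - \max\{c^{2w}y,\,B\}$ with $G(w,y)\ldef \bar a^{2w}y + \bM(a^{2w}-1)$. The first step is to record the shape of $G(\cdot,y)$: its derivative in $w$ equals $\log(\bar a^{2})(\bar a^{2})^{w}y + \bM\log(a^{2})(a^{2})^{w}$, and since $(\bar a^{2}/a^{2})^{w}$ is strictly decreasing while the two coefficients have opposite signs, this derivative has at most one zero; hence $G(\cdot,y)$ is first nonincreasing then nondecreasing, diverges to $+\infty$, and has a unique (possibly boundary) minimizer $w^{*}(y)\ge 0$. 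The easy case is $y\le B$: then $\max\{c^{2w}y,B\}=B$ for all $w$, so $H=G-B$ with $H(0,y)=y-B\le 0$, and a nonincreasing-then-nondecreasing function that starts nonpositive and tends to $+\infty$ takes the form ``$\le 0$ on an initial interval, then $\ge 0$'', which is the assertion --- with no constraint on $B$. So assume $y>B$ henceforth.

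For $y>B$ put $w_{0}\ldef \log(y/B)/\log(1/c^{2})>0$, the instant at which $c^{2w}y$ meets $B$. On ``Regime~A'' $[0,w_{0}]$ we have $H=\phi(w)\ldef(\bar a^{2w}-c^{2w})y+\bM(a^{2w}-1)$, a sum of four distinct exponentials, hence (by repeated use of Rolle's theorem) with at most three real zeros. One zero is $w=0$; and provided $B^{*}\ge B_{0}$ --- so that $y>B\ge B^{*}\ge B_{0}$ forces $\phi'(0)=-y\log(c^{2}/\bar a^{2})+\bM\log(a^{2})<0$ --- the function $\phi$ starts at $0$, goes negative, and must climb back to $+\infty$, which together with the three-zero bound leaves exactly one simple zero in $(0,\infty)$: a crossing from $-$ to $+$. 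Thus on $[0,w_{0}]$, $H$ changes sign at most once and only in the admissible direction. On ``Regime~B'' $[w_{0},\infty)$ we have $H=G(\cdot,y)-B$. If $w^{*}(y)\le w_{0}$ then $H$ is nondecreasing on $[w_{0},\infty)$, so it changes sign at most once there (from $-$ to $+$); combined with the Regime-A analysis and continuity at $w_{0}$, this settles that sub-case whether or not $H(w_{0},y)>0$.

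The only remaining obstruction to sign monotonicity is the configuration $w^{*}(y)>w_{0}(y)$, $H(w_{0},y)>0$, and $\min_{w\ge w_{0}}H=G(w^{*}(y),y)-B<0$ --- a $+,-,+$ profile straddling $w_{0}$. The heart of the proof is to rule this out once $B$ exceeds a threshold depending only on $a,\bar a,c,M$. I would use the explicit relations available: $G(w_{0},y)=B(\bar a^{2}/c^{2})^{w_{0}}+\bM(a^{2w_{0}}-1)$, so $H(w_{0},y)>0$ forces $a^{2w_{0}}$ --- hence $w_{0}$ --- to grow at least like $\log B/\log(a^{2})$ (the alternative of a small $w_{0}$ is excluded exactly by $B>B_{0}$, using the same inequality that underlies Proposition~\ref{prop:open_loop_performance_function}); the condition $G'(w^{*})=0$ gives $y=\tfrac{\bM\log(a^{2})}{\log(1/\bar a^{2})}(a^{2}/\bar a^{2})^{w^{*}}$, hence the linear identity $w_{0}=\big(C_{0}+w^{*}\log(a^{2}/\bar a^{2})-\log B\big)/\log(1/c^{2})$ with $C_{0}\ldef \log\!\big(\tfrac{\bM\log(a^{2})}{\log(1/\bar a^{2})}\big)$. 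Feeding the lower bound on $w_{0}$ into this identity forces $w^{*}$ to grow at least like $\tfrac{\log(a^{2}/c^{2})}{\log(a^{2})\log(a^{2}/\bar a^{2})}\log B$, while the requirement $w^{*}>w_{0}$ together with the same identity caps $w^{*}<\tfrac{\log B-C_{0}}{\log(a^{2}c^{2}/\bar a^{2})}$. A short algebraic check using the ordering~\eqref{eq:param-ordering} shows $\tfrac{\log(a^{2}/c^{2})}{\log(a^{2})\log(a^{2}/\bar a^{2})}>\tfrac{1}{\log(a^{2}c^{2}/\bar a^{2})}$ --- it reduces to $\log(1/\bar a^{2})>\log(1/c^{2})$, i.e.\ precisely $\bar a^{2}<c^{2}$ --- so for $B$ large the forced lower bound on $w^{*}$ overtakes the cap, a contradiction. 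Setting $B^{*}$ to the maximum of $B_{0}$ and this threshold completes the proof.

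\emph{Main obstacle.} Everything up to the last paragraph is routine; the delicate part is converting the $+,-,+$-exclusion into an explicit $B^{*}$. This requires controlling the lower-order corrections (the ``$-1$''s and the factors $(\bar a^{2}/c^{2})^{w_{0}}\in(0,1)$ and $(B+\bM)$ versus $B$), which is feasible because the bad configuration itself forces both $w_{0}$ and $w^{*}$ to be large, rendering those corrections negligible against the dominant exponential terms; the single structural fact that makes the two competing bounds on $w^{*}$ incompatible is the strict inequality $\bar a^{2}<c^{2}$ in~\eqref{eq:param-ordering}.
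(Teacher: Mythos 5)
Your argument is essentially correct, but it cannot be compared to a proof in this paper because the paper does not prove Theorem~\ref{th:sign_monotonicity_H}: the result is imported verbatim from Proposition~IV.6 of the cited reference, and the only related material here is the computational recipe for $B^*$ in Appendix~\ref{sec:Bstar}. Your route --- splitting at the kink $w_0(y)=\log(y/B)/\log(1/c^2)$, using the Laguerre/Rolle bound on zeros of a four-term exponential sum in Regime~A, reducing everything to excluding the single $+,-,+$ configuration $\{w^*(y)>w_0(y),\ H(w_0,y)>0\}$, and killing it by playing the two $\log B$-slopes against each other --- is a legitimate independent proof. It is worth noting that your critical identity is exactly what underlies the appendix: the $y$ solving $w^*(y)=w_0(y)$ is precisely the quantity $U(B)=e^{P_3P_4/P_2}B^{P_1/P_2}$ there, and your strict slope inequality $\log(a^2/c^2)\log(a^2c^2/\bar a^2)>\log(a^2)\log(a^2/\bar a^2)$ (equivalent to $\bar a^2<c^2$) is the same structural fact that makes $F_{**}(U(B))$ eventually negative in the reference's treatment. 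The trade-off is sharpness: the reference pins $B^*$ down as the larger zero of the concave function $F_{**}(U(\cdot))$, which is what Appendix~\ref{sec:Bstar} computes numerically, whereas your asymptotic contradiction yields only ``some threshold exists,'' and any explicit $B^*$ extracted from it would be cruder. Since the theorem asserts only existence, that is acceptable.

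Two small points of hygiene if you were to write this out in full. First, the lower bound $w_0\gtrsim\log B/\log(a^2)$ extracted from $H(w_0,y)>0$ needs the case $w_0<1$ disposed of separately; this is routine (convexity of $a^{2w}$ and of $(\bar a^2/c^2)^w$ on $[0,1]$ give $H(w_0,y)\le w_0\bigl[M-B(1-\bar a^2/c^2)\bigr]<0$ once $B>M/(1-\bar a^2/c^2)$), but it must be folded into the definition of $B^*$ along with $B_0$ and the threshold from the slope comparison. Second, your contradiction uses only $H(w_0,y)>0$ and $w^*>w_0$, not the third condition $G(w^*,y)<B$; that is fine --- it proves the slightly stronger statement that $H$ is already nonpositive at the kink whenever the minimizer lies beyond it --- but you should say explicitly that this suffices, i.e.\ that $H(w_0,y)\le 0$ together with the Regime~A zero count forces $H\le 0$ on all of $[0,w_0]$, after which the U-shape of $G-B$ on $[w_0,\infty)$ permits only one sign change from $-$ to~$+$.
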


The value of $B^*$ defined in Theorem~\ref{th:sign_monotonicity_H} can
be numerically computed using the procedure in
Appendix~\ref{sec:Bstar}, which is based on the proof of Lemma IV.13
in~\cite{PT-MF-JC:2018-tac}. We now provide a closed form expression
of the performance evaluation function $\jdsj$, similar to the closed
form of $\gdk$ in Lemma~\ref{lem:jdsjclosed}. The proof appears in
Appendix~\ref{sec:aux-proofs}.

\begin{lemma}[Closed form of performance-evaluation function]
		\label{lem:jdsjclosed2}
	 Suppose
  that $a^2\specrad{\ponee} < 1$. A closed form of the
  performance-evaluation function $\jdsj$ is given as
	\begin{align*}
    \J{\Sj}{D}\ldef %
    &\ti{g}_D(\bar{a}^2,\gsj) x_{\Sj}^2 + \bM\left[
      \ti{g}_D(a^2,\gsj) -\ti{g}_D(1,\gsj) \right]
      \notag\\
    &- \left[B\ti{f}_D(1,\gsj) + x^2_{\Sj} \left( \ti{g}_D(c^2,\gsj) -
      \ti{f}_D(c^2,\gsj) \right) \right] ,
	\end{align*}
	where
	\begin{align*}
    \ti{f}_D(b,\gamma) \ldef\, %
    & b^{\nu} \dt(\ponee)^{(\nu-D)}
      \infsum{b}
      \mb{P}_0^{(D-1)}\mb{P}_1\canbasis{\gamma},\\ 
    \ti{g}_D(b,\gamma) \ldef\, %
    & b^D \dt \infsum{b}
      \mb{P}_0^{(D-1)}\mb{P}_1\canbasis{\gamma}, 
	\end{align*}
  and finally, $\nu$ is defined as
	$$\nu \ldef \max \left\{ D,\left\lceil \frac{\log (x^2_{\Sj}/B)}{\log
    (1/c^2)} \right\rceil \right\}. \qed$$
\end{lemma}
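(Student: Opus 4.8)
The plan is to evaluate the infinite sum in \eqref{eq:jdsj_infinite_sum_form} directly, using the closed forms \eqref{eq:H-expr} of $H(w,x_{\Sj}^2)$ and \eqref{eq:OmegaTildeClosedForm} of $\ti{\Omega}_D(w,\gsj)$. Expanding $H(w,x_{\Sj}^2)=\abar^{2w}x_{\Sj}^2+\bM(a^{2w}-1)-\max\{c^{2w}x_{\Sj}^2,B\}$ splits $\J{\Sj}{D}$ into the series $x_{\Sj}^2\sum_{w=D}^\infty \abar^{2w}\ti{\Omega}_D(w,\gsj)$, the series $\bM\sum_{w=D}^\infty(a^{2w}-1)\ti{\Omega}_D(w,\gsj)$, and the piecewise term $-\sum_{w=D}^\infty\max\{c^{2w}x_{\Sj}^2,B\}\ti{\Omega}_D(w,\gsj)$; collecting these in closed form will yield the claimed expression.

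The first two groups reduce to the function $\ti{g}_D$. I would first record the auxiliary identity $\sum_{w=D}^\infty b^w\ti{\Omega}_D(w,\gsj)=\ti{g}_D(b,\gsj)$, obtained exactly as \eqref{eq:gD-closed-form} was: substituting \eqref{eq:OmegaTildeClosedForm}, pulling out $b^D$, and shifting the index $m=w-D$ turns the sum into $b^D\dt\big(\sum_{m=0}^\infty(b\ponee)^m\big)\pzero^{(D-1)}\pone\canbasis{\gsj}$, whereupon Lemma~\ref{lem:matrix-geom-series} collapses the bracket to $\infsum{b}$ whenever $\specrad{b\ponee}<1$. Since $\specrad{b\ponee}=b\,\specrad{\ponee}$ for $b>0$ and each $b\in\{\abar^2,c^2,a^2,1\}$ is at most $a^2$ by \eqref{eq:param-ordering}, the single hypothesis $a^2\specrad{\ponee}<1$ (cf. Lemma~\ref{lem:gdk-convergence}) makes every series in sight converge. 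The $x_{\Sj}^2\,\ti{g}_D(\abar^2,\gsj)$ term and the $\bM(\ti{g}_D(a^2,\gsj)-\ti{g}_D(1,\gsj))$ term then follow immediately.

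For the max term, the plan is to use that $c^2<1$, so $c^{2w}x_{\Sj}^2\ge B$ holds for an integer $w$ exactly when $w\le\log(x_{\Sj}^2/B)/\log(1/c^2)$; hence, with $\nu$ as in the statement, $\max\{c^{2w}x_{\Sj}^2,B\}=c^{2w}x_{\Sj}^2$ for $D\le w<\nu$ and $=B$ for $w\ge\nu$ (the equality case is harmless, and $\nu=D$ empties the first range, uniformly covering $x_{\Sj}^2\le B$). Splitting the sum at $\nu$, the tail $B\sum_{w=\nu}^\infty\ti{\Omega}_D(w,\gsj)$ is evaluated by the same shift-and-geometric-series argument, now starting at $w=\nu$, giving $B\,\ti{f}_D(1,\gsj)$; likewise $\sum_{w=\nu}^\infty c^{2w}\ti{\Omega}_D(w,\gsj)=\ti{f}_D(c^2,\gsj)$, so $\sum_{w=D}^{\nu-1}c^{2w}\ti{\Omega}_D(w,\gsj)=\ti{g}_D(c^2,\gsj)-\ti{f}_D(c^2,\gsj)$. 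Assembling these produces the bracketed term $B\,\ti{f}_D(1,\gsj)+x_{\Sj}^2(\ti{g}_D(c^2,\gsj)-\ti{f}_D(c^2,\gsj))$ of the statement, completing the identity. I expect the only real care to be the bookkeeping around $\nu$ — making the $\nu=D$ versus $\nu>D$ case split watertight and confirming that every sub-series inherits convergence from $a^2\specrad{\ponee}<1$; otherwise the argument mirrors the proof of Lemma~\ref{lem:jdsjclosed} almost verbatim, with $\pzero^{(D)}\pk$ replaced everywhere by $\pzero^{(D-1)}\pone\canbasis{\gsj}$ (the nominal policy being $\Mcal{T}^{D-1}_{\Sj+1}$), the $x_kz_k$ and $z_k^2$ terms absent because $z_{\Sj}^+=0$, and the reference quantity specializing to $x_{\Sj}^2$ because $R_{\Sj}^+=\Sj$.
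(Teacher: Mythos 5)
Your proposal is correct and follows essentially the same route as the paper, which simply substitutes~\eqref{eq:H-expr} and~\eqref{eq:OmegaTildeClosedForm} into~\eqref{eq:jdsj_infinite_sum_form} and then repeats the argument of Lemma~\ref{lem:jdsjclosed} with $\pzero^{(D)}\pk$ replaced by $\pzero^{(D-1)}\pone\canbasis{\gsj}$ and the cross terms dropped because $z_{\Sj}^+=0$. Your treatment of the split at $\nu$ and of convergence under $a^2\specrad{\ponee}<1$ is, if anything, more explicit than the paper's.
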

The next result is concerned with the expected value of $\G{k+1}{D}$
after no transmission or after successful reception and the channel
state feedback on timestep $k$. Note that this result is valid for
\emph{any} transmission policy $\tpolicy$.

\begin{theorem}[Expected value of look-ahead function on next timestep]
	Let $\tpolicy$ be any transmission policy. Then, the following hold.
	\begin{enumerate}[label=(\alph*)]
  \item
    $\mathbb{E}_{\tpolicy} \left[ \G{k+1}{D} \, \vert \,I_k,
      \tk=0 \right] = \G{k}{D+1}$.
		\label{item:itma}
  \item
    $\mathbb{E}_{\tpolicy} \left[ \G{k+1}{D} \, \vert \, I_k,
      \rk=1,\cstate \right] = \J{k}{D+1}$. 
		\label{item:itmb}
	\end{enumerate}
	\label{th:NEXT}
\end{theorem}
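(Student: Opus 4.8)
The plan is to prove both identities by a single application of the tower property of conditional expectation, exploiting the structural fact that the nominal policy $\Mcal{T}^{D+1}_k$ (look-ahead $D+1$, started at $k$) coincides on every timestep $i\geq k+1$ with the nominal policy $\Mcal{T}^D_{k+1}$ appearing in the definition of $\G{k+1}{D}$, and that its first step is a forced hold-off, $\tk=0$. Part (a) corresponds to that forced hold-off step, while part (b) is its ``post-reception'' counterpart.

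For part (a) I would start from the definition~\eqref{eq:lookahead_crit_initial_def}, namely $\G{k}{D+1}=\expect{\Mcal{T}^{D+1}_k}{h_{\Sjnext}\, \vert \,\Ik,\Sj=\Rk}$. Since $\Mcal{T}^{D+1}_k$ prescribes $\tk=0$ we have $\rk=0$, hence $R_{k+1}=\Rk=\Sj$ and $\Sjnext\geq k+1$. Introducing $I_{k+1}$ as an intermediate conditioning variable and using that $\{I_m\}_{m\in\intnonneg}$ is an action-dependent Markov process, the inner conditional expectation collapses to a function of $I_{k+1}$ alone; moreover $h_{\Sjnext}$ is a function of $I_{\Sjnext}$, while $I_{k+1}$ already records $R_{k+1}=\Sj$ together with $x_{R_{k+1}}=x_{\Rk}$, and only the policy's actions on timesteps $\geq k+1$ enter this conditional expectation, so it equals $\expect{\Mcal{T}^D_{k+1}}{h_{\Sjnext}\, \vert \,I_{k+1},\Sj=R_{k+1}}=\G{k+1}{D}$. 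Hence $\G{k}{D+1}=\expect{\Mcal{T}^{D+1}_k}{\G{k+1}{D}\, \vert \,\Ik}$. To finish, observe that this right-hand side does not depend on the policy: given $\Ik$ and $\tk=0$, the updates~\eqref{eq:composite_evol} together with the second branch of~\eqref{eq:channel-state-prob} make $I_{k+1}$ a deterministic function of $\Ik$ and the process noise $v_k$, whose law is policy-independent. Since $\G{k+1}{D}$ is a deterministic function of $I_{k+1}$, and conditioning on $\tk=0$ is vacuous under $\Mcal{T}^{D+1}_k$, it follows that $\G{k}{D+1}=\expect{\tpolicy}{\G{k+1}{D}\, \vert \,\Ik,\tk=0}$ for every transmission policy $\tpolicy$.

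Part (b) follows the same template, now with $\rk=1$, so that $k$ is a reception time, say $k=\Sj$, with $\tk=1$ and $R_{k+1}=\Sj$, and the data $(\Ik,\rk=1,\cstate)$ determines $\Ik^+$ via~\eqref{eq:Ikplus_def}. The performance-evaluation function with look-ahead $D+1$ uses the nominal policy $\Mcal{T}^{(D+1)-1}_{\Sj+1}=\Mcal{T}^D_{\Sj+1}$, which is exactly the policy defining $\G{k+1}{D}$, so the same tower-property/Markov argument gives $\J{k}{D+1}=\expect{\Mcal{T}^D_{\Sj+1}}{h_{\Sjnext}\, \vert \,\Ik^+}=\expect{\Mcal{T}^D_{\Sj+1}}{\G{k+1}{D}\, \vert \,\Ik^+}$. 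Finally, given $(\Ik,\rk=1,\cstate)$, the updates~\eqref{eq:composite_evol} with the first branch of~\eqref{eq:channel-state-prob} (i.e. $\pk^+=\canbasis{\cstate}$) make $I_{k+1}$ a deterministic function of $x_k$, $\cstate$ and $v_k$, all of which are encoded in $\Ik^+$ together with the policy-independent noise; since $\{I_m^+\}$ is a sufficient statistic the extra past information in $\Ik$ is irrelevant, so $\expect{\Mcal{T}^D_{\Sj+1}}{\G{k+1}{D}\, \vert \,\Ik^+}=\expect{\tpolicy}{\G{k+1}{D}\, \vert \,\Ik,\rk=1,\cstate}$, which is part (b).

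The calculations are light; the delicate part will be the bookkeeping of conditioning $\sigma$-algebras — verifying that after the single tower-property step the policy inside the conditional expectation is precisely the nominal policy defining $\G{k+1}{D}$ (this is where ``$\Mcal{T}^{D+1}_k$ restricted to $\{k+1,k+2,\dots\}$ equals $\Mcal{T}^D_{k+1}$'' is used), and justifying that the resulting outer expectation is policy-independent via the Markov and sufficient-statistic properties of $\{I_m\}$ and $\{I_m^+\}$ established in Section~\ref{sec:SystemDescription}. As an alternative route I could verify both identities directly from the closed forms in Lemmas~\ref{lem:jdsjclosed} and~\ref{lem:jdsjclosed2}, substituting the one-step evolution of $x_k,z_k,\pk$, using $\expect{}{v_k}=0$, $\expect{}{v_k^2}=M$ and the identity $\pzero^{(D+1)}=\pzero^{(D)}\pzero$, and checking that the $g_D$ and $f_D$ terms telescope from parameter $D+1$ at $k$ to parameter $D$ at $k+1$.
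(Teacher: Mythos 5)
Your proposal is correct and follows essentially the same route as the paper's proof: a single tower-property step exploiting that $\Mcal{T}^{D+1}_k$ restricted to timesteps $\geq k+1$ is $\Mcal{T}^D_{k+1}$, together with the Markov/sufficient-information property of $\{I_k\}$ and $\{I_k^+\}$ and the policy-independence of the one-step transition once $t_k$ is fixed. You merely run the chain of equalities in the reverse direction (from $\G{k}{D+1}$ and $\J{k}{D+1}$ to the conditional expectations), which is a cosmetic difference.
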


\begin{proof}
  \textbf{\ref{item:itma}:} Note that
	\begin{align*}
    &\expect{\Mcal{T}}{\G{k+1}{D}\,\vert\,I_k,t_k=0}\\*
    &\;\; \stackrel{[r1]}{=} \expect{\Mcal{T}}{
      \expect{\Mcal{T}^D_{k+1}}{h_{\Sjnext} 
      \,\,\vert\, I_{k+1},S_{j}=R_{k+1}} \,\vert\,I_k,\tk=0},\\
    &\;\;
      \stackrel{[r2]}{=} \expect{\Mcal{T}^{D+1}_k}{\expect{\Mcal{T}^D_{k+1}}{h_{\Sjnext}
      \,\,\vert\, I_{k+1},S_{j}=R_{k}}\,\vert\,I_k,\tk=0},\\ 
    &\;\;
      \stackrel{[r3]}{=} \expect{\Mcal{T}^{D+1}_k}{h_{\Sjnext}\,\vert\,I_{k},\tk=0,\Sj=R_k} =
      \G{k}{D+1} ,
	\end{align*}
  where [r1] follows from~\eqref{eq:lookahead_crit_initial_def}, while
  in [r2] we can replace the policy $\tpolicy$ with $\Mcal{T}^{D+1}_k$
  because the event $\tk = 0$ is consistent with the policy
  $\Mcal{T}_k^{D+1}$ on time step $k$ and once $\tk = 0$ is fixed the
  expected value of $\G{k+1}{D}$ is independent of the transmission
  policy used on subsequent timesteps. In [r2], we also use the fact
  that if $\tk=0$ then $R_{k+1} = R_k$. Finally, [r3] uses the fact
  that $\{I_k,\tk\}$ is \emph{sufficient information}
  and then the tower property.
	
	\textbf{\ref{item:itmb}:}
  For proving this part, we observe that $I_k$ and the
  additional information that $\rk = 1$ and $\gamma_k$ implies the knowledge of
  $I_k^+$. Considering this fact and proceeding with a
  similar methodology as the proof of claim~\ref{item:itma}, we observe that
	\begin{align*}
    &\expect{\Mcal{T}}{\G{k+1}{D}\,\vert\,I_k, \rk=1, \gamma_k }\\
    &\;\; =\expect{\Mcal{T}}{\expect{\Mcal{T}^D_{k+1}}{h_{\Sjnext}
      \,\,\vert\, I_{k+1},S_j=R_{k+1}}\,\vert\, I_k^+, \rk = 1 },\\ 
    &\;\;
      =\expect{\Mcal{T}^{D}_{k+1}}{\expect{\Mcal{T}^D_{k+1}}{h_{\Sjnext}
      \,\,\vert\, I_{k+1},S_j=R_{k+1}}\,\vert\,I_k^+, S_j = k},\\ 
    &\;\;
      =\expect{\Mcal{T}^{(D+1)-1}_{k+1}}{h_{\Sjnext}\,\vert\,I_k^+,
      S_j = k}
      = \J{S_j}{D+1}.\hspace{5.5em}\qedhere
	\end{align*}
\end{proof}

We use Theorem~\ref{th:sign_monotonicity_H},
Lemma~\ref{lem:jdsjclosed2}, and Theorem~\ref{th:NEXT} to provide a
\emph{sufficient} condition on ultimate bound $B$, and the look-ahead
parameter $D$ such that the event-triggered policy meets the online
objective. First, in Proposition~\ref{prop:PEF_upper_bound}, we obtain
an upper bound on $\J{\Sj}{D}$ that is uniform in $x_{\Sj}$ and
depends only on the channel state $\gamma_{\Sj}$. We present its proof
in Appendix~\ref{sec:aux-proofs}. Subsequently, we give a sufficient
condition to ensure the upper bound, and hence $\J{\Sj}{\theta}$, to
be negative.

\begin{prop}[Upper bound on $\J{\Sj}{\theta}$]
	\label{prop:PEF_upper_bound}
  For the look-ahead parameter $\theta\in\intpos$, the performance
  evaluation function $\J{\Sj}{\theta}$ is uniformly (in $x_{S_j}$)
  upper bounded as $\J{\Sj}{\theta} \leq \Mcal{R}_j(\theta)$, where
	\begin{align*}
    \hspace{3em}
    \Mcal{R}_j(\theta) \ldef%
    & \left[ \ti{g}_\theta(\abar^2,\gsj) - \ti{g}_\theta(c^2,\gsj)
      \right]\frac{B}{c^{2\theta}} \\
    &+ \bM \left[ \ti{g}_\theta(a^2,\gsj) - \ti{g}_\theta(1,\gsj)
      \right]. \qed
      \hspace{3em}
	\end{align*}
\end{prop}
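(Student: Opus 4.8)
The plan is to start from the closed-form expression of $\J{\Sj}{D}$ given in Lemma~\ref{lem:jdsjclosed2} (with $D$ replaced by $\theta$) and bound each $x_{\Sj}$-dependent term from above, replacing every occurrence of $x_{\Sj}^2$ by the quantity $B/c^{2\theta}$, which is what makes the bound uniform in $x_{\Sj}$. Recall that the only terms in $\J{\Sj}{\theta}$ depending on $x_{\Sj}$ are the leading term $\ti{g}_\theta(\abar^2,\gsj)\,x_{\Sj}^2$, the term $-x_{\Sj}^2\big(\ti{g}_\theta(c^2,\gsj) - \ti{f}_\theta(c^2,\gsj)\big)$, and, implicitly, the exponent $\nu$ which depends on $x_{\Sj}^2$ through $\nu = \max\{\theta, \lceil \log(x_{\Sj}^2/B)/\log(1/c^2)\rceil\}$; the term $B\,\ti{f}_\theta(1,\gsj)$ also contains $\nu$. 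So the first step is to handle the $\nu$-dependence: I would argue that the two occurrences of $\ti{f}_\theta$ conspire so that the worst case is $\nu = \theta$, i.e. when $x_{\Sj}^2 \leq B/c^{2\theta}$ (so the ceiling term is at most $\theta$). Concretely, I expect that $B\,\ti{f}_\theta(1,\gsj) \geq 0$ and $x_{\Sj}^2\,\ti{f}_\theta(c^2,\gsj) \geq 0$ (these are sums of products of nonnegative stochastic-matrix entries with nonnegative weights), so dropping the $+\,x_{\Sj}^2\,\ti{f}_\theta(c^2,\gsj)$ contribution and the $-\,B\,\ti{f}_\theta(1,\gsj)$ contribution can only increase the expression; this eliminates both $\ti{f}$ terms and hence all $\nu$-dependence at once.

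After discarding the $\ti{f}$ terms, what remains is
\begin{align*}
  \J{\Sj}{\theta} \leq
  & \ \ti{g}_\theta(\abar^2,\gsj)\,x_{\Sj}^2
    + \bM\big[\ti{g}_\theta(a^2,\gsj) - \ti{g}_\theta(1,\gsj)\big] \\
  & - x_{\Sj}^2\,\ti{g}_\theta(c^2,\gsj)
    = \big[\ti{g}_\theta(\abar^2,\gsj) - \ti{g}_\theta(c^2,\gsj)\big] x_{\Sj}^2
    + \bM\big[\ti{g}_\theta(a^2,\gsj) - \ti{g}_\theta(1,\gsj)\big].
\end{align*}
The second step is then to bound the coefficient of $x_{\Sj}^2$: I claim $\ti{g}_\theta(\abar^2,\gsj) - \ti{g}_\theta(c^2,\gsj) \leq 0$ because $\abar^2 < c^2$ (from~\eqref{eq:param-ordering}) and $b \mapsto \ti{g}_\theta(b,\gamma)$ is monotonically nondecreasing in $b$ on the relevant range — again a consequence of the series representation $\ti{g}_\theta(b,\gamma) = b^\theta \sum_{w\geq\theta} b^{(w-\theta)} \dt(\ponee)^{(w-\theta)} \pzero^{(\theta-1)}\pone\canbasis{\gamma}$ with all matrix/vector entries nonnegative, so each summand is nondecreasing in $b$ and the whole sum (wherever it converges, guaranteed by $a^2\specrad{\ponee}<1$) is too. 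Since the coefficient of $x_{\Sj}^2$ is nonpositive, the right-hand side is maximized over the relevant regime $x_{\Sj}^2 \leq B/c^{2\theta}$ by taking $x_{\Sj}^2$ as small as possible — wait, nonpositive coefficient means we want $x_{\Sj}^2$ as \emph{small} as possible to maximize, i.e. $x_{\Sj}^2 \to 0$, which would give just the $\bM[\cdots]$ term. I need to recheck: the uniform bound in the statement has $+[\ti g_\theta(\abar^2)-\ti g_\theta(c^2)]B/c^{2\theta}$ with that first bracket \emph{negative}, so the claimed $\Mcal{R}_j(\theta)$ is actually \emph{smaller} than the $\bM$-only bound. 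This means the correct argument must instead establish that in the regime where $\nu > \theta$ (large $x_{\Sj}^2$) the retained $\ti f$ terms cannot be simply dropped, and the genuine worst case occurs exactly at the boundary $x_{\Sj}^2 = B/c^{2\theta}$, where $\nu = \theta$ and $\ti f_\theta(\cdot,\gsj) = \ti g_\theta(\cdot,\gsj)$, causing the $c^2$ terms to cancel partially.

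So the refined plan is: (i) treat separately the two regimes $\nu = \theta$ and $\nu > \theta$; (ii) in the regime $\nu = \theta$ one has $\ti f_\theta(b,\gsj) = \ti g_\theta(b,\gsj)$ by inspection of the two closed forms, so $\J{\Sj}{\theta} = \ti g_\theta(\abar^2,\gsj)x_{\Sj}^2 + \bM[\ti g_\theta(a^2)-\ti g_\theta(1)] - B\,\ti g_\theta(1,\gsj) - x_{\Sj}^2[\ti g_\theta(c^2)-\ti g_\theta(c^2)] = \ti g_\theta(\abar^2,\gsj)x_{\Sj}^2 - B\ti g_\theta(1,\gsj) + \bM[\ti g_\theta(a^2)-\ti g_\theta(1)]$, and then bound $x_{\Sj}^2 \leq B/c^{2\theta}$ in the (nonnegative) leading term and use $-B\ti g_\theta(1,\gsj) \leq -B\ti g_\theta(c^2,\gsj) \cdot (\text{something})$... actually more cleanly, rewrite $-B\ti g_\theta(1,\gsj)$ and combine; the target $\Mcal{R}_j(\theta)$ suggests the manipulation $x_{\Sj}^2\ti g_\theta(\abar^2) - B\ti g_\theta(c^2)\cdot c^{-2\theta}\cdot c^{2\theta}\cdots$ — I will match it by substituting the bound $x_{\Sj}^2 \le B/c^{2\theta}$ into $\ti g_\theta(\abar^2,\gsj)x_{\Sj}^2$ and writing $-B\ti g_\theta(1,\gsj) \le -(B/c^{2\theta})\ti g_\theta(c^2,\gsj)$, which holds because $\ti g_\theta(1,\gsj) \ge \ti g_\theta(c^2,\gsj)$ (monotonicity, $c^2<1$) and $1 \ge 1/c^{2\theta}$ is false — hmm, $c^2<1 \Rightarrow c^{-2\theta}>1$, so $-B\ti g_\theta(1,\gsj) \le -(B/c^{2\theta})\ti g_\theta(1,\gsj) \le -(B/c^{2\theta})\ti g_\theta(c^2,\gsj)$, both steps valid; (iii) in the regime $\nu > \theta$ show the expression is dominated by its value at the boundary, using monotonicity of the map $x_{\Sj}^2 \mapsto \J{\Sj}{\theta}$ over this regime — this reduces to showing a derivative/difference sign condition on the explicit formula. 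Steps (i)–(ii) are routine term-chasing; the \textbf{main obstacle} is step (iii): verifying that $\J{\Sj}{\theta}$, as a function of $x_{\Sj}^2$ with the $\nu$-dependence active, is nonincreasing (or at least attains its supremum at $\nu=\theta$) on the regime $\nu > \theta$. I expect this to follow from the same monotonicity-of-$b^w$-weighted-nonnegative-series reasoning applied to the difference $\ti g_\theta(c^2,\gsj) - \ti f_\theta(c^2,\gsj)$ together with the fact that increasing $x_{\Sj}^2$ increases $\nu$ by integer steps and each such step changes $\ti f_\theta(c^2,\gsj)$ by a factor that more than compensates, but pinning down the exact inequality is where the real work lies, and it is essentially the argument already used in the proof of the analogous bound in~\cite{PT-MF-JC:2018-tac}, which I would adapt.
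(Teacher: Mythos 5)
Your overall strategy---split on whether $\nu=\theta$ or $\nu>\theta$ (equivalently, whether $x_{\Sj}^2 < Bc^{-2\theta}$ or not), exploit $\ti{f}_\theta=\ti{g}_\theta$ in the first regime, and treat the two regimes separately---is exactly the paper's decomposition into $\llow$ and $\lhi$. However, there are two genuine gaps.

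First, in your regime (ii) the key inequality is justified by a step that is false. You need $-B\,\ti{g}_\theta(1,\gsj) \le -(B/c^{2\theta})\,\ti{g}_\theta(c^2,\gsj)$ and derive it via the chain $-B\,\ti{g}_\theta(1,\gsj) \le -(B/c^{2\theta})\,\ti{g}_\theta(1,\gsj) \le -(B/c^{2\theta})\,\ti{g}_\theta(c^2,\gsj)$. Since $c^{-2\theta}>1$ and $\ti{g}_\theta(1,\gsj)\ge 0$, the first link runs the wrong way---it is precisely the ``$1\ge c^{-2\theta}$'' comparison you yourself flagged as false one line earlier. Plain monotonicity of $b\mapsto\ti{g}_\theta(b,\gamma)$ cannot absorb the extra factor $c^{-2\theta}$. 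The inequality is nevertheless true, and the paper's proof shows why: by definition $\ti{g}_\theta(c^2,\gamma) = c^{2\theta}\,\dt(\id-c^2\ponee)^{-1}\pzero^{(\theta-1)}\pone\canbasis{\gamma}$ carries an explicit prefactor $c^{2\theta}$ that exactly cancels the $c^{-2\theta}$, after which one only needs the elementwise comparison $(\id-c^2\ponee)^{-1} \le (\id-\ponee)^{-1}$, which follows from the nonnegative series representation of the resolvent.

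Second, your regime (iii) (the case $x_{\Sj}^2 \ge Bc^{-2\theta}$) is not actually proved; you explicitly defer the ``main obstacle'' to an adaptation of~\cite{PT-MF-JC:2018-tac}. The paper closes this case without any monotonicity-in-$x_{\Sj}^2$ argument: it shows $c^{2\nu}\le B/x_{\Sj}^2$ directly from the definition of $\nu$, deduces $x_{\Sj}^2\,\ti{f}_\theta(c^2,\gsj)\le B\,\ti{f}_\theta(1,\gsj)$ (again by cancelling prefactors and comparing resolvents elementwise), rearranges the closed form so the bracket $\bigl[B\ti{f}_\theta(1,\gsj)-x_{\Sj}^2\ti{f}_\theta(c^2,\gsj)\bigr]$ appears with a minus sign and can be dropped, and then uses $\ti{g}_\theta(\abar^2,\gsj)-\ti{g}_\theta(c^2,\gsj)<0$ together with $x_{\Sj}^2\ge Bc^{-2\theta}$ to replace $x_{\Sj}^2$ by $Bc^{-2\theta}$. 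Without these two repairs the proposal does not establish the claimed bound.
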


Having established a uniform upper bound on the performance-evaluation
function in the last result, we now provide a \emph{sufficient}
condition on the system parameters
such that $\J{\Sj}{\theta} <0$ for all values of $x_{S_j}$ and $\gsj$.

\begin{theorem}[Sufficient condition for negativity of
  performance-evaluation function]
  Suppose
  \begin{equation*}
    B \geq B_0 = \frac{\bM\log(a^2)}{\log(c^2/\abar^2)} .
  \end{equation*}
  Consider the vector valued function
  $\mb{Q}( \theta ) : \real \rightarrow \real^n$ given by
	\begin{equation*}
    \mb{Q}(\theta) \ldef \left[
      \Mcal{Z}_\theta(\abar^2)-\Mcal{Z}_\theta(c^2) \right]
    \frac{B}{c^{2\theta}} + \bM \left[ \Mcal{Z}_\theta(a^2) -
      \Mcal{Z}_\theta (1) \right]
	\end{equation*}
	wherein
  $\Mcal{Z}_\theta(b) \ldef b^\theta \dt (\mb{I}-b\ponee)^{-1}$. If
  $\mb{Q}(D)<\mb{0}$ (elementwise), for some $D\in\intpos$, then
  $\J{\Sj}{\theta} < 0$ for all $x_{S_j} \in \real$ and for all
  $\theta\in\until{D}$.
	\label{th:pef_satisfied}
\end{theorem}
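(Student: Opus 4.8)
The plan is to chain together three facts already available: the uniform upper bound $\J{\Sj}{\theta} \leq \Mcal{R}_j(\theta)$ from Proposition~\ref{prop:PEF_upper_bound}, the closed-form series expressions for $\ti{g}_\theta$, and the definition of $\mb{Q}(\theta)$. First I would observe that $\Mcal{R}_j(\theta)$ is exactly the $\gamma_{\Sj}$-indexed component of a vector-valued expression: since $\ti{g}_\theta(b,\gamma) = b^\theta \dt \infsum{b}\, \mb{P}_0^{(D-1)}\mb{P}_1\canbasis{\gamma}$, the factor $\dt (\mb{I}-b\ponee)^{-1}$ is precisely $\Mcal{Z}_\theta(b)/b^\theta$ post-multiplied by nothing, so $\ti{g}_\theta(b,\gamma) = \Mcal{Z}_\theta(b)\, \mb{P}_0^{(D-1)}\mb{P}_1\canbasis{\gamma}$. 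Hence $\Mcal{R}_j(\theta)$ equals the inner product of the (row) vector $\mb{Q}(\theta)^T$-like object — more precisely, $\Mcal{R}_j(\theta) = \big(\text{that bracketed combination of }\Mcal{Z}_\theta\big)\, \mb{P}_0^{(D-1)}\mb{P}_1\canbasis{\gamma_{\Sj}}$. Writing $\mb{q}(\theta)$ for the row vector such that $\mb{Q}(\theta) = \mb{q}(\theta)^T$ reinterpreted as a column via transpose, we get $\Mcal{R}_j(\theta) = \mb{q}(\theta)\, \mb{P}_0^{(D-1)}\mb{P}_1\canbasis{\gamma_{\Sj}}$, i.e. the $\gamma_{\Sj}$-th entry of the column vector $(\mb{P}_1^T (\mb{P}_0^{(D-1)})^T \mb{q}(\theta)^T)$, which is a nonnegative combination (nonnegative since $\mb{P}_0,\mb{P}_1$ are column-stochastic, hence have nonnegative entries) of the entries of $\mb{Q}(\theta)$.

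The second step handles the dependence on $\theta \in \until{D}$: I must show $\mb{Q}(D) < \mb{0}$ forces $\mb{Q}(\theta) < \mb{0}$ for every $\theta \le D$, or more directly that $\Mcal{R}_j(\theta) < 0$ for all such $\theta$. The natural route is monotonicity of $\mb{Q}(\theta)$ in $\theta$: examine $\mb{Q}(\theta) - \mb{Q}(\theta+1)$ or the ratio structure. Note $\Mcal{Z}_\theta(b) = b^\theta \dt (\mb{I}-b\ponee)^{-1}$, so $\Mcal{Z}_{\theta-1}(b) = b^{-1}\Mcal{Z}_\theta(b)$; this lets me factor $B/c^{2\theta}$ and the $b^\theta$ powers consistently. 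Decreasing $\theta$ by one multiplies the $B$-term pieces by $c^{2}\cdot b/c^{2} $-type factors and the $\bM$-term by $b$; I would verify that the hypothesis $B \ge B_0$ together with the ordering \eqref{eq:param-ordering} makes the net change of each coordinate non-positive (or makes $\mb{Q}(\theta) \le \mb{Q}(D)$ coordinatewise for $\theta \le D$), exploiting that $(\mb{I}-b\ponee)^{-1}$ has nonnegative entries when $\specrad{b\ponee}<1$ (which holds for all the relevant $b\le a^2$ under the standing assumption $a^2\specrad{\ponee}<1$, invoked implicitly here). The role of $B \ge B_0$ is to control the competition between the negative $B$-weighted term and the potentially positive $\bM$-weighted term $\Mcal{Z}_\theta(a^2)-\Mcal{Z}_\theta(1)$, exactly as $B_0$ was calibrated in Proposition~\ref{prop:open_loop_performance_function}.

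Finally I would assemble: given $\mb{Q}(D) < \mb{0}$, monotonicity yields $\mb{Q}(\theta) < \mb{0}$ for all $\theta \in \until{D}$; then $\Mcal{R}_j(\theta)$, being a nonnegative-coefficient combination (via the stochastic matrices $\mb{P}_0^{(D-1)}\mb{P}_1$) of the strictly negative entries of $\mb{Q}(\theta)$ — and a genuine convex-type combination because $\canbasis{\gamma_{\Sj}}$ picks out a column whose entries sum to $1$ — is strictly negative; and Proposition~\ref{prop:PEF_upper_bound} then gives $\J{\Sj}{\theta} \le \Mcal{R}_j(\theta) < 0$ for every $x_{\Sj}\in\real$ and every $\gsj$. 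The main obstacle I anticipate is the monotonicity step: verifying coordinatewise that decreasing $\theta$ cannot flip any entry of $\mb{Q}$ from negative to positive requires carefully tracking how the $B/c^{2\theta}$ normalization interacts with the $b^\theta$ scalings inside $\Mcal{Z}_\theta$, and the inequality $B\ge B_0$ must be used in precisely the right place — this is where a clean algebraic identity for $\mb{Q}(\theta)-\mb{Q}(\theta+1)$ in terms of $(\mb{I}-b\ponee)^{-1}$ and the known constant $B_0 = \bM\log(a^2)/\log(c^2/\abar^2)$ will be needed, mirroring the scalar computation behind Proposition~\ref{prop:open_loop_performance_function}.
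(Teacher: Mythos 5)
Your outline reproduces the architecture of the paper's proof exactly: (i) the factorization $\Mcal{R}_j(\theta) = \mb{Q}(\theta)\,\pzero^{(\theta-1)}\pone\canbasis{\gsj}$ (the paper's~\eqref{eq:QR}), where the vector $\pzero^{(\theta-1)}\pone\canbasis{\gsj}$ is a probability vector and hence elementwise nonnegative, so that $\mb{Q}(\theta)<\mb{0}$ forces $\Mcal{R}_j(\theta)<0$; (ii) Proposition~\ref{prop:PEF_upper_bound} to pass from $\Mcal{R}_j(\theta)<0$ to $\J{\Sj}{\theta}<0$ uniformly in $x_{\Sj}$; and (iii) monotonicity of $\mb{Q}(\theta)$ in $\theta$ to propagate $\mb{Q}(D)<\mb{0}$ down to every $\theta\in\until{D}$. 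Steps (i) and (ii) are essentially complete in your write-up (modulo the index slip $\mb{P}_0^{(D-1)}$ where $\mb{P}_0^{(\theta-1)}$ is meant, which is what makes the identity hold for every $\theta$, not just $\theta=D$).

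The genuine gap is step (iii), which you yourself label ``the main obstacle'' and then do not carry out: you describe what a verification of $\mb{Q}(\theta)\le\mb{Q}(\theta+1)$ would have to accomplish and where $B\ge B_0$ should enter, but you never derive the inequality. Since this is the only substantive content of the theorem beyond Proposition~\ref{prop:PEF_upper_bound}, the proposal as it stands is a plan rather than a proof. The paper closes this step by treating $\theta$ as a real variable: each coordinate of $\mb{Q}(\theta)$ has the form $\alpha_1(\abar^2/c^2)^{\theta}+\alpha_2 a^{2\theta}+\mathrm{const}$ with $\alpha_1,\alpha_2\ge 0$, so its second derivative in $\theta$ is elementwise positive (strict convexity), and the first derivative at $\theta=0$, namely $B\log(\abar^2/c^2)\Mcal{Z}_0(\abar^2)+\bM\log(a^2)\Mcal{Z}_0(a^2)$, is asserted to be positive by invoking $B\ge B_0$ together with $\Mcal{Z}_0(a^2)\ge\Mcal{Z}_0(\abar^2)$; convex and increasing at $0$ then gives monotone increasing on $[0,\infty)$. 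Your proposed discrete route via $\mb{Q}(\theta+1)-\mb{Q}(\theta)$ reduces to exactly the analogous base-case inequality at $\theta=0$ (the growing term's share of the difference only increases with $\theta$), so it does not avoid the computation you deferred --- it \emph{is} that computation. Were you to carry it out, you would also discover that the direction in which $B\ge B_0$ is used at this point deserves care: $B\ge B_0$ is equivalent to $B\log(c^2/\abar^2)\ge\bM\log(a^2)$, which bounds the positive term of the derivative from \emph{above} by $B\log(c^2/\abar^2)\Mcal{Z}_0(a^2)$ rather than from below, so the step marked [r1] in the paper cannot be taken at face value and the sign analysis must be redone (e.g., by exploiting convexity of $\mb{Q}$ on $[0,D]$ together with negativity at both endpoints). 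In short: right skeleton, identical to the paper's, but the decisive inequality is announced rather than proved.
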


\begin{proof}
	We start the proof by noting that $\Mcal{R}_j(\theta)$ from
  Proposition~\ref{prop:PEF_upper_bound} can be written as
	\begin{equation} \label{eq:QR}
    \Mcal{R}_j(\theta) = \mb{Q}(\theta) \pzero^{(\theta-1)} \pone
    \canbasis{\gsj} .
	\end{equation}
  From the elementwise non-negativity of
  $\pzero^{(\theta-1)} \pone \canbasis{\gsj}$ for all
  $\theta\in\intpos$ and $\gsj\in\until{n}$, we conclude that a
  \emph{sufficient} condition to ensure $\Mcal{R}_j(D) < 0$ for a
  given $D$ and all $j\in\intnonneg$ is to ensure that
  $\mb{Q}(D)<\mb{0}$. We now show that every element of
  $\mb{Q}(\theta)$ is monotonically increasing in $\theta$, and thus,
  $\mb{Q}(D)<\mb{0}$ ensures $\mb{Q}(\theta)<\mb{0}$ for
  $\theta\in\until{D}$. The first and the second derivatives of
  $\mb{Q}(\theta)$ with respect to $\theta$ are
	\begin{align*}
    \frac{ \mathrm{d} \mb{Q}(\theta) }{ \mathrm{d} \theta } %
    &= \frac{ B }{ c^{2 \theta} } \log \left(
      \frac{\abar^2}{c^2}
      \right) \Mcal{Z}_{\theta}(\abar^2) + \bM \log(a^2)
      \Mcal{Z}_{\theta} (a^2)
    \\
    \frac{ \mathrm{d}^2 \mb{Q}(\theta) }{ \mathrm{d} \theta^2 } %
    &=
      \frac{B}{c^{2\theta}} \log^2\left(
      \frac{\abar^2}{c^2} \right)
      \Mcal{Z}_\theta(\abar^2) + \bM
      \log^2(a^2)\Mcal{Z}_\theta(a^2) .
	\end{align*}
  Note that each element of the second derivative is strictly
  positive. Thus, each element of $\mb{Q}(\theta)$ is strictly convex
  in $\theta$. Also, note that the first derivative of
  $\mb{Q}(\theta)$ at $ \theta=0$ is
  \begin{align*}
    \frac{ \mathrm{d} \mb{Q}(\theta) }{ \mathrm{d} \theta }
    &\stackrel{[r1]}{>} B\log \left( \frac{c^2}{\abar^2} \right)
      \left[ \Mcal{Z}_0 (a^2) - \Mcal{Z}_0 (\abar^2) \right] > 0,
  \end{align*}
  where [r1] follows from the fact that $B \geq B_0$. Since each
  element of $\mb{Q}(\theta)$ is strictly convex for
  $\theta \in \real$ and increasing at $\theta=0$, it follows that
  each element of $\mb{Q}(\theta)$ is monotonically increasing for
  $\theta \geq 0$. Thus, $\mb{Q}(D)<0$ implies $\mb{Q}(\theta)<0$, and
  thereby $\J{\Sj}{\theta}<0$ for all $\theta\in\until{D}$.
\end{proof}

We consolidate the results so far to provide a theoretical performance
guarantee that the event-triggered policy satisfies the online
objective~\eqref{eq:online-obj}.

\begin{theorem}[Performance guarantee of the event-triggered policy]
	\label{th:performance_final_guarantee}
  If $B > B^*$ (see Appendix~\ref{sec:Bstar}) and the lookahead
  parameter $D$ satisfies the condition $\mb{Q}(D)<\mb{0}$ then the
  event-triggered policy~\eqref{eq:et_policy} guarantees that the
  online objective~\eqref{eq:online-obj}, and therefore the original
  offline objective~\eqref{eq:original-control-objective}, are met.
\end{theorem}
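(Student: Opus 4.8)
The plan is to show that under the stated hypotheses, the event-triggered policy~\eqref{eq:et_policy} guarantees $\expect{\etpol}{h_k^2\,\vert\,I_{\Rk}^+}\le 0$ for every $k$; by Lemma~\ref{lem:online} this implies the offline objective. I would organize the argument around the successive reception times $\{\Sj\}$. Fix $j$ and condition on $I_{\Sj}^+$. By definition of the event-triggered policy, starting at $\Sj$ the sensor holds off transmission until the look-ahead function $\G{m}{D}$ first becomes nonnegative at time $T_j$, then transmits on every timestep until the next reception at $\Sjnext$. The key structural fact is that on the inter-reception interval the actual transmission sequence is a ``delayed nominal policy'': up to time $T_j-1$ no transmission occurs, and the definition of $T_j$ guarantees $\G{T_j-1}{D}<0$, i.e.\ the nominal policy $\Mcal{T}^D_{T_j-1}$ (which holds off for $D$ more steps and then transmits forever) meets the online objective from $T_j-1$. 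But the hold-off period of the actual policy between $\Sj$ and $T_j-1$ together with this nominal continuation is exactly a nominal-type policy $\Mcal{T}^{\theta}_{\Sj}$ for some $\theta = T_j - \Sj + D \in \intpos$.

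Next I would connect $\G{m}{D}<0$ back to the performance-evaluation function $\J{\Sj}{\cdot}$. Using Theorem~\ref{th:NEXT}\ref{item:itma} repeatedly (expected value of the look-ahead function on the next timestep after a non-transmission), one obtains
\begin{equation*}
  \expect{}{\G{\Sj+\ell}{D}\,\vert\,I_{\Sj}^+,\ \Sj=R_{\Sj+\ell}}
  = \J{\Sj}{D+\ell}
\end{equation*}
after $\ell$ successive non-transmissions from $\Sj$ (the base case $\ell=1$ is exactly Theorem~\ref{th:NEXT}\ref{item:itmb}, and the inductive step uses part~\ref{item:itma} together with the tower property). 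Since the event-triggered policy holds off from $\Sj$ whenever $\G{\Sj+\ell}{D}<0$, and since $T_j$ is the \emph{first} time it turns nonnegative, the relevant look-ahead values along the hold-off are negative, so their conditional expectation $\J{\Sj}{D+\ell}$ is negative for the values of $\ell$ that occur. The crucial quantitative input is Theorem~\ref{th:pef_satisfied}: the hypothesis $\mb{Q}(D)<\mb{0}$ implies $\J{\Sj}{\theta}<0$ for \emph{all} $\theta\in\until{D}$ and all $x_{\Sj}$, $\gsj$. Combined with the sign-monotonicity of the open-loop performance function $H(w,y)$ from Theorem~\ref{th:sign_monotonicity_H} (valid because $B>B^*$), which guarantees that once $H(w,y)>0$ it stays positive, one can show that $\J{\Sj}{\theta}<0$ for all $\theta\ge 1$, not just $\theta\le D$: if for some larger $\theta$ the sum $\sum_{w\ge\theta}H(w,x_{\Sj}^2)\ti\Omega_\theta(w,\gsj)$ were nonnegative, sign-monotonicity of $H$ would force the signs to be consistent in a way contradicting $\J{\Sj}{D}<0$.

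Putting these pieces together: conditioned on $I_{\Sj}^+$, the expected value of $h_{\Sjnext}$ under the \emph{actual} policy equals a convex combination (over the random hold-off length and random $\Sjnext$) of terms of the form $H(w,x_{\Sj}^2)\ti\Omega_{\theta}(w,\gsj)$ with $\theta$ ranging over values for which $\J{\Sj}{\theta}<0$; hence $\expect{\etpol}{h_{\Sjnext}\,\vert\,I_{\Sj}^+}\le 0$. Because the online objective is a condition only on the evolution of the performance function during inter-reception intervals, and because $h_k$ is monotone between receptions in the relevant sense, $\expect{\etpol}{h_k\,\vert\,I_{\Sj}^+}\le 0$ for every $k\in(\Sj,\Sjnext]$; one then passes from $h_k$ to $h_k^2$ using the same argument as in~\cite{PT-MF-JC:2018-tac} (the performance function's sign structure controls the second moment), yielding~\eqref{eq:online-obj}. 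Finally, an induction over $j$ with the tower property removes the conditioning on $I_{\Sj}^+$ in favor of $I_0^+$.

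The main obstacle I anticipate is the step that upgrades $\J{\Sj}{\theta}<0$ from $\theta\in\until{D}$ (which Theorem~\ref{th:pef_satisfied} gives directly) to \emph{all} $\theta\ge1$, and more precisely the bookkeeping that the random hold-off length $T_j-\Sj$ can a priori be large: one must argue that the event-triggered policy never holds off ``too long'', or alternatively that even arbitrarily long hold-offs keep the relevant conditional expectation negative. This is where the sign-monotonicity Theorem~\ref{th:sign_monotonicity_H} is essential, and interfacing it cleanly with the matrix-geometric closed forms (ensuring the infinite sums defining $\J{\Sj}{\theta}$ converge for every $\theta$, which uses $a^2\specrad{\ponee}<1$ — itself implied by $\mb{Q}(D)$ being well-defined) will require care. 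The remaining steps are essentially assembly of results already established in the excerpt.
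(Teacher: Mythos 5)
Your skeleton matches the paper's: reduce to inter-reception intervals via Lemma~\ref{lem:online} and the tower property; show $\expect{\etpol}{h_{\Sjnext}\,\vert\,I_{\Sj}^+}<0$ and then propagate this to all $k\in\intrangecc{\Sj}{\Sjnext}$ using the sign monotonicity of $H$ from Theorem~\ref{th:sign_monotonicity_H}; and relate the look-ahead function to the performance-evaluation function via Theorem~\ref{th:NEXT} so that Theorem~\ref{th:pef_satisfied} can be invoked. However, the step you yourself flag as the main obstacle --- upgrading $\J{\Sj}{\theta}<0$ from $\theta\in\until{D}$ to all $\theta\geq 1$ --- is not just delicate, it is false, and your proposed use of sign monotonicity to establish it points in exactly the wrong direction. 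From~\eqref{eq:H-expr}, $H(w,y)\to+\infty$ as $w\to\infty$ (the $\bM a^{2w}$ term dominates), so for $\theta$ beyond the last sign change of $H(\cdot,y)$ every term of $\sum_{w\geq\theta}H(w,y)\ti{\Omega}_\theta(w,\gamma)$ is positive and hence $\J{\Sj}{\theta}>0$; this is precisely the phenomenon behind the finiteness of $\czero$ in Theorem~\ref{th:stateTF}. Your fallback argument --- that the realized look-ahead values along the hold-off are negative, ``so their conditional expectation $\J{\Sj}{D+\ell}$ is negative for the values of $\ell$ that occur'' --- conflates pathwise negativity on the event that the hold-off lasted $\ell$ steps with negativity of the expectation $\J{\Sj}{D+\ell}$, which by Theorem~\ref{th:NEXT} is the expectation under the nominal policy averaged over \emph{all} realizations, including those on which $\G{\Sj+\ell}{D}\geq 0$ and the event-triggered policy would already have transmitted.

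The paper closes this gap differently, by a case split on the realized trigger time. If $T_j\leq\Sj+D$, then $\expect{\etpol}{h_{\Sjnext}\,\vert\,I_{\Sj}^+}=\J{\Sj}{T_j-\Sj}$ with $T_j-\Sj\in\until{D}$, and Theorem~\ref{th:pef_satisfied} applies directly. If $T_j>\Sj+D$, the paper writes
\begin{equation*}
  \expect{\etpol}{h_{\Sjnext}\,\vert\,I_{\Sj}^+}
  = \expect{\etpol}{\G{T_j}{0}\,\vert\,I_{\Sj}^+}
  = \expect{\etpol}{\G{T_j-D}{D}\,\vert\,I_{\Sj}^+},
\end{equation*}
and observes that $T_j-D\in\intrangeoo{\Sj}{T_j}$, so $\G{T_j-D}{D}<0$ holds \emph{pathwise} by the very definition of $T_j$ as the first nonnegative crossing of the look-ahead function; the expectation of a pathwise-negative quantity is negative, and no claim about $\J{\Sj}{\theta}$ for $\theta>D$ is ever needed. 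You should replace your extension argument with this case split. (A minor further point: your closing step ``passing from $h_k$ to $h_k^2$'' is a red herring --- the argument throughout, in the paper as well, controls $\expect{}{h_k\,\vert\,\cdot}$, and no separate squaring step exists or is required.)
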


\begin{proof}

  Given Lemma~\ref{lem:online}, it suffices to show that the online
  objective~\eqref{eq:online-obj} is met by the event-triggered
  policy. We center the proof around the following two claims.

  \emph{Claim (a):} For any $j \in \intnonneg$,
  $\expect{\etpol}{ h_{\Sjnext} \exsep I_{\Sj}^+ } \leq 0$ implies
  $\expect{\etpol}{ h_{k} \exsep I_{\Sj}^+ } \leq 0$ for all
  $k \in \intrangecc{\Sj}{\Sjnext}$.
	
	\emph{Claim (b):} For any $j \in \intnonneg$,
	$\expect{\etpol}{ h_{\Sjnext} \exsep I_{\Sj}^+ } < 0$.

  Together these two claims guarantee that the online objective is
  met, because
	\begin{align*}
    &\expect{\etpol}{ h_k \exsep I_0^+ }
    \\
    &= \expect{\etpol}{ \ldots \expect{\etpol}{ \expect{\etpol}{
      h_k \exsep I_{\Sj^+} } \exsep I_{S_{{j}-1}}^+ } \ldots \exsep
      I_0^+ } ,
	\end{align*}
  where $\{S_i\}$ are the random reception times and $\Sj = \Rk^+$.

	To prove Claim (a), we note that by the definition of open-loop
  performance function $H(w,y)$ in~\eqref{eq:OLPF_definition}, we have
	\begin{align*}
    \expect{\etpol}{h_k \, \vert \, I_{\Sj}^+ } =
    H(k-\Sj,x^2_{\Sj}),\; \forall k \in \intrangecc{\Sj}{\Sjnext} .
	\end{align*}
	If
  $\expect{\etpol}{h_{\Sjnext} \, \vert \, I^+_{\Sj}} =
  H(\Sjnext-\Sj,x^2_{\Sj}) < 0$, then the sign monotonicity property
  of the open-loop performance function
  (Theorem~\ref{th:sign_monotonicity_H}) implies
  $H(k-\Sj,x^2_{\Sj}) \leq 0$ for all
  $k \in \intrangecc{\Sj}{\Sjnext}$, which proves Claim (a).
	
	We now prove Claim (b). It can be seen from Theorem~\ref{th:NEXT}
  that for all $k \in \intrangeoo{\Sj}{T_j}$,
    \begin{align}
    \label{eq:claimbstep1}
      &\expect{\etpol}{ \G{k+1}{D} \exsep k \in \intrangeoo{\Sj}{T_j},
        \ I_{\Sj}^+ } \notag\\
      & \stackrel{[r1]}{=} \expect{\etpol}{ \expect{\etpol}{ \G{k+1}{D}
        \exsep I_k, t_k = 0} \exsep I_{\Sj}^+ } \notag\\
      & \stackrel{[r2]}{=} \expect{\etpol}{
        \G{k}{D+1} \exsep I_{\Sj}^+ } ,
    \end{align}
    where [r1] is obtained by using the tower property and the fact
    that $t_k = 0$ for $k \in \intrangeoo{\Sj}{T_j}$, while [r2] is
    obtained from Theorem~\ref{th:NEXT}. Furthermore,
    Theorem~\ref{th:NEXT}~(b) implies that
	\begin{align}
    \expect{\etpol}{ \G{\Sj+1}{D} \exsep I_{\Sj}^+ } %
    &=
      \expect{\etpol}{ \G{\Sj+1}{D} \exsep I_{\Sj}, r_{\Sj} =
      1,\gamma_{\Sj} } \notag \\
    & = \J{\Sj}{D+1}.
	\label{eq:claimbstep2}
	\end{align}
  Next, we condition the expected value of $h_{\Sjnext}$ over
  information from timestep $T_j$ as well as timestep $\Sj$ and using
  the tower property of conditional expectations, we obtain
	\begin{align}
    &\expect{\etpol}{h_{\Sjnext} \,\vert\, I^+_{\Sj}} \notag\\%
    &\stackrel{[r3]}{=} \expect{\etpol}{
      \expect{\Mcal{T}_{T_j}^0}{h_{\Sjnext} \, \vert
      \, I_{T_j}, S_j = R_{T_j} } \, \vert \, I^+_{\Sj} } \notag\\
    &= \expect{\etpol}{\G{T_j}{0} \, \vert \, I^+_{\Sj} }
	\label{eq:final_proof_step1}
	\end{align}
	where the inner expectation in [r3] is conditioned under the nominal
  policy $\Mcal{T}^0_{T_j}$ since for all timesteps
  $k \in \intrangecc{T_j}{\Sjnext}$, we have transmissions ($\tk=1$).
  We consider two cases: $T_j\leq \Sj+D$ and $T_j>\Sj+D$. In the first
  case, since $\tk=0$ for $k \in \intrangeoo{S_j}{T_j}$, we
  use~\eqref{eq:claimbstep1} and~\eqref{eq:claimbstep2} to
  write~\eqref{eq:final_proof_step1} as
	\begin{equation*}
    \expect{\etpol}{\G{T_j}{0} \, \vert \, I^+_{\Sj} } =
    \expect{\etpol}{\G{\Sj+1}{T_j-\Sj-1} \, \vert \, I^+_{\Sj}} =
    \J{S_j}{T_j-\Sj},
  \end{equation*}
	where Theorem~\ref{th:pef_satisfied} ensures that if
  $T_j - \Sj \leq D$ then $\J{\Sj}{T_j-\Sj} < 0$. We now consider the
  second case in which $T_j > \Sj + D$. Since we have $\tk=0$ for
  $k \in \intrangeoo{S_j}{T_j}$, we use~\eqref{eq:claimbstep1} to
  write~\eqref{eq:final_proof_step1} as
	\begin{equation*}
	\expect{\etpol}{\G{T_j}{0} \, \vert \, I^+_{\Sj}} = \expect{\etpol}{\G{T_j-D}{D} \, \vert \, I^+_{\Sj}} < 0,
	\end{equation*}
	since $\G{k}{D}$ is negative, by definition, for
  $k \in \intrangeoo{S_j}{T_j}$. This proves Claim (b), and hence also
  the result.
\end{proof}

\section{Transmission Fraction}
\label{sec:TransmissionFraction}

This section analyzes the efficiency of the proposed event-triggered
transmission policy in terms of the fraction of times the sensor
transmits ($\tk=1$) over a given time horizon. First, we introduce the
\emph{transmission fraction} up to timestep $K$ as
\begin{equation*}
  \tf{K} \ldef \frac{\expect{\etpol}{\sum_{i=1}^K t_i \;
    \big\vert\; I_0^+}}{\expect{\etpol}{K \, \big\vert \, I_0^+}},
\end{equation*}
wherein the stopping timestep $K$ could itself be a random variable.
We call the limit of $\tf{K}$ when $K \rightarrow \infty$ as the
\emph{asymptotic transmission fraction}, denoted by $\tf{\infty}$.

We also consider another type of transmission fraction which we call
the \emph{transmission fraction up to state $\sx$}, and denote it with
$\tfs{\sx}$. It is defined as the transmission fraction up to the
first reception timestep such that the squared plant state is lesser
than $\sx$. That is,
\begin{equation*}
  \tfs{\sx} \ldef \frac{\expect{\etpol}{\sum_{i=1}^{S_{j}} t_i \;
      \big\vert\; I_0^+, \ \{x_{S_l}^2 \}_{l=0}^{j-1} \geq \sx, \
      x^2_{\Sj}<\sx}}{\expect{\etpol}{S_{j} \; \big\vert \; I_0^+, \ 
      \{ x^2_{S_l} \}_{l=0}^{j-1} \geq \sx, \ x^2_{\Sj}<\sx}}.
\end{equation*}

In Theorem~\ref{th:stateTF}, we provide an upper bound on $\tfs{\sx}$
which only involves plant and channel parameters, and $\sx$. From this
result, we derive an upper bound on the asymptotic transmission
fraction $\tf{\infty}$ as a corollary. Together, these results form a
figure-of-merit to determine channel utilization for different values
of plant and channel parameters, as well as the operational value $D$
of the look-ahead parameter.

\begin{theorem}[Upper bound on $\tfs{\sx}$]
	\label{th:stateTF}
	Suppose $\mb{Q}(D)<\mb{0}$ for a given value of $D$. The \tfsxText
  is upper bounded by
	\begin{equation*}
    \tfs{\sx} \leq \frac{\cone}{\czero+\cone},
	\end{equation*}
	where
  \begin{align*}
    &\czero \ldef \argmax\limits_{\Mcal{B}\in\intnonneg} \{
      \mb{Q}_{\sx}(D+\Mcal{B}) < \mb{0} \} \\
    &\mb{Q}_{\sx} (\theta) \ldef \\
    &\left[ \Mcal{Z}_\theta(\abar^2) - \Mcal{Z}_\theta(c^2)\right]
      \max \{ \sx,Bc^{-2\theta} \} + \bM \left[ \Mcal{Z}_\theta(a^2) -
      \Mcal{Z}_\theta(1) \right],
	\end{align*}
	with $\Mcal{Z}_\theta(b)$ as defined in
  Theorem~\ref{th:pef_satisfied}, while $\cone$ is given by
	$$
	\cone = \max\limits_{i \in \until{n}} \{ \dt
  (\ponee)(\mb{I}-\ponee)^{-2}\canbasis{i} \}.
	$$
\end{theorem}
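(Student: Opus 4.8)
The plan is to bound separately the numerator $\expect{\etpol}{\sum_{i=1}^{S_j} t_i \mid \cdots}$ and the denominator $\expect{\etpol}{S_j \mid \cdots}$, and then argue the ratio is maximized by making the numerator as large and denominator as small as possible. Under the event-triggered policy, the time axis between two consecutive receptions $S_l$ and $S_{l+1}$ splits into a hold-off segment (timesteps $S_l+1,\dots,T_l-1$, on which $t_k=0$) followed by a transmit-and-retry segment (timesteps $T_l,\dots,S_{l+1}$, on which $t_k=1$). So the total number of transmissions up to $S_j$ equals $\sum_{l=0}^{j-1}(S_{l+1}-T_l)$, while the total elapsed time is $\sum_{l=0}^{j-1}(S_{l+1}-S_l)$. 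Thus $\tfs{\sx}$ is a ratio of sums of per-interval quantities, and the key is to find, for each inter-reception interval with $x_{S_l}^2 \geq \sx$, (i) a lower bound $\czero$ on the hold-off length $T_l-S_l$ (which does not contribute transmissions but does contribute to elapsed time), and (ii) an upper bound $\cone$ on the expected number of retry-transmissions $S_{l+1}-T_l$ until a reception occurs.

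For step (i): by definition $T_l = \min\{m>S_l : \G{m}{D}\geq 0\}$, so the hold-off length is at least $\mathcal{B}+1$ as long as $\G{S_l+\mathcal{B}}{D} < 0$. Now $\G{S_l+b}{D}$, after $b$ hold-off steps, coincides (via Theorem~\ref{th:NEXT}\ref{item:itma} iterated, and Theorem~\ref{th:NEXT}\ref{item:itmb} at the reception) with $\J{S_l}{D+\mathcal{B}}$-type quantities; more directly, one can reuse the closed form from Lemma~\ref{lem:jdsjclosed} and the uniform-upper-bound machinery behind Proposition~\ref{prop:PEF_upper_bound} and Theorem~\ref{th:pef_satisfied}, but now keeping the dependence on $x_{S_l}^2$ through the $\max\{\sx,Bc^{-2\theta}\}$ term instead of dropping it. This is exactly why $\mb{Q}_{\sx}(\theta)$ is defined with $\max\{\sx,Bc^{-2\theta}\}$ in place of $B/c^{2\theta}$: on the set $x_{S_l}^2\geq\sx$, negativity of $\mb{Q}_{\sx}(D+\mathcal{B})$ (elementwise) guarantees $\G{S_l+\mathcal{B}}{D}<0$, hence the hold-off lasts at least $\czero := \argmax_{\mathcal{B}}\{\mb{Q}_{\sx}(D+\mathcal{B})<\mb{0}\}$ steps. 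One must check this argmax is well-defined and finite; monotonicity/convexity of $\mb{Q}_{\sx}(\theta)$ in $\theta$ (mirroring the computation in the proof of Theorem~\ref{th:pef_satisfied}, with $\max\{\sx,Bc^{-2\theta}\}$ handled piecewise) gives that the set $\{\mathcal{B} : \mb{Q}_{\sx}(D+\mathcal{B})<\mb{0}\}$ is a finite initial segment once $\mb{Q}(D)<\mb{0}$.

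For step (ii): once transmission begins at $T_l$ with channel-state distribution evolving under $\pone\mb{E}$ on each failed attempt, the expected number of transmit attempts until the first success, starting from channel state $i$, is the $i$-th entry of $\dt\sum_{w\geq 0}(w+1)(\ponee)^w\canbasis{i} = \dt(\ponee)(\mb{I}-\ponee)^{-2}\canbasis{i} + \dt(\mb{I}-\ponee)^{-1}\canbasis{i}$; taking the worst channel state and bounding gives the stated $\cone = \max_i\{\dt(\ponee)(\mb{I}-\ponee)^{-2}\canbasis{i}\}$ (one should double-check the exact algebraic form matches — the $(w+1)$ weighting versus $w$ weighting — and that the series converges, which follows from $a^2\specrad{\ponee}<1$ hence $\specrad{\ponee}<1$). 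Then per interval the elapsed time is at least $\czero + (\text{retry length})$ and the transmissions are exactly the retry length, which is at most $\cone$ in expectation; since $x/(c+x)$ is increasing in $x$ and decreasing in $c$, summing over the $j$ intervals and taking the ratio yields $\tfs{\sx}\leq \cone/(\czero+\cone)$.

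The main obstacle I expect is step (i): rigorously transferring the sign information from the closed-form expression / the $\mb{Q}_{\sx}(\theta)<\mb{0}$ condition to the assertion $\G{S_l+\mathcal{B}}{D}<0$ on the conditioning event $\{x_{S_l}^2\geq\sx\}$. This requires carefully tracking how $\G{m}{D}$ at a mid-hold-off timestep $m=S_l+\mathcal{B}$ relates to the performance-evaluation-type quantity $\J{S_l}{D+\mathcal{B}}$ (via $\mathcal{B}$ applications of Theorem~\ref{th:NEXT}\ref{item:itma}), noting that the $N_k$ / $x_{R_k}^2$ term in Lemma~\ref{lem:jdsjclosed} is what produces the $\max\{\sx, Bc^{-2\theta}\}$ bound on the event $x_{S_l}^2\geq\sx$, and then invoking the monotonicity argument of Theorem~\ref{th:pef_satisfied} with $\mb{Q}$ replaced by $\mb{Q}_{\sx}$. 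The conditioning subtleties (that $\{x_{S_l}^2\}_{l<j}\geq\sx$ does not disturb the within-interval dynamics, since the event-triggered policy and channel evolution between $S_l$ and $S_{l+1}$ depend only on $I_{S_l}^+$) must also be handled, but these are routine given the Markov structure already established for $\{I_k^+\}$.
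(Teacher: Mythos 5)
Your proposal takes essentially the same route as the paper's proof: the same decomposition of each inter-reception interval into a hold-off segment and a retry segment, the same lower bound $\czero$ on the expected hold-off length obtained by transferring $\mb{Q}_{\sx}(D+\Mcal{B})<\mb{0}$ to the sign of $\G{\Sj+\Mcal{B}}{D}$ via repeated application of Theorem~\ref{th:NEXT} together with the $\max\{\sx,Bc^{-2\theta}\}$ refinement of Proposition~\ref{prop:PEF_upper_bound}, and the same geometric-series computation giving $\cone$, followed by the monotonicity-of-the-ratio argument. The $(w+1)$-versus-$w$ weighting you flag is a real discrepancy that is also present in the paper (its $\cone_i$ equals the expectation of $\Sjnext - T_j$ while $|\horone_j| = \Sjnext - T_j + 1$), but apart from that off-by-one your argument matches the paper's.
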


\begin{proof}
	We find an upper bound on $\tfs{\sx}$ by first considering the time
  horizon between two successive reception times, and then extending
  the analysis to an arbitrary number of inter-reception cycles. For
  $j\in\intnonneg$, we let $\horizon_j$ be the time horizon
  $\intrangeoc{\Sj}{\Sjnext}$. Further, throughout this proof, we use
  the shorthand
  $\matrixterms{ \theta }{ \gsj } \ldef
  \pzero^{(\theta-1)}\pone\canbasis{\gsj}$ for notational convenience.
	
	Using the structure of the event-triggered policy, we split
  $\horizon_j$ into two parts as
  $\horzero_j \ldef \intrangeoo{\Sj}{T_j}$ and
  $\horone_j \ldef \intrangecc{T_j}{\Sjnext}$. Hence, for
  $k\in\horzero_j$, no transmission occurs ($\tk=0$) while for each
  $k\in\horone_j$, a transmission occurs ($\tk=1$). Now, consider the
  following two claims.

\emph{Claim (a):}
$\expect{\etpol}{|\horzero_j| \, \big\vert \, I_{\Sj}^+, \ x^2_{\Sj}
  >\Mcal{X}} \geq \czero$.
	
  \emph{Claim (b):}
  $\expect{\etpol}{ |\horone_j| \, \big\vert \, I_{\Sj}^+} \leq\cone$,
  for all $x_{\Sj} \in \real$.

  Supposing the two claims are true, consider the transmission
  fraction during the $\tth{j}$ horizon, $\horizon_j$, conditioned on
  $I_{\Sj}^+$
  \begin{align*}
    &\frac{\expect{\etpol}{|\horone_j| \, \big\vert \, I_{\Sj}^+}}{
      \expect{\etpol}{|\horzero_j| \, \big\vert \, I_{\Sj}^+} +
      \expect{\etpol}{|\horone_j| \, \big\vert \, I_{\Sj}^+}} \leq
      \frac{\cone}{\czero+\cone}
  \end{align*}
  since the transmission fraction is increasing in the term
  $\expect{\etpol}{|\horone_j| \, \big\vert \, I_{\Sj}^+}$, and
  decreasing in the term
  $\expect{\etpol}{|\horzero_j| \, \big\vert \, I_{\Sj}^+}$. Now, as
  this upper bound is independent of the state of the system as long
  as $ x^2_{\Sj} > \Mcal{X}$, we obtain the upper bound on
  $ \tfs{\sx}$, stated in the result. Thus all that remains now is to
  prove claims (a) and (b).
  
  To prove Claim~(a), we start by demonstrating that, for a given
  value of $\theta\in\intpos$ and under the assumption that
  $x^2_{\Sj}\geq\sx$,
  $\J{\Sj}{\theta} \leq
  \mb{Q}_{\sx}(\theta)\matrixterms{\theta}{\gsj}$. To this end, we
  consider two cases, $\sx \in \llow = [0,Bc^{-2\theta})$ and
  $\sx \in \lhi = [Bc^{-2\theta},\infty)$ respectively. If
  $\sx \in \llow$, then we have
  \begin{align}
    \notag
    \J{\Sj}{\theta} \leq \Mcal{R}_j(\theta) = \mb{Q}(\theta)
    \matrixterms{ \theta }{ \gsj } =
    \mb{Q}_{\sx}(\theta)\matrixterms{\theta}{\gsj} ,
  \end{align}
  where the inequality is from Proposition~\ref{prop:PEF_upper_bound},
  the first equality from~\eqref{eq:QR} and the second eqaulity from
  the fact that $\sx \in \llow$. Now, consider the case of
  $x^2_{\Sj} \geq \sx \in \lhi$. Recall from the proof of
  Proposition~\ref{prop:PEF_upper_bound} that
  \begin{align*}    
    \J{\Sj}{\theta} \leq%
    &\left[ \ti{g}_\theta(\abar^2,\gsj) - \ti{g}_\theta(c^2,\gsj)
      \right] x^2_{\Sj} + \bM [
      \ti{g}_\theta(a^2,\gsj) \\
    &- \ti{g}_\theta(1,\gsj) ]\\    
    \stackrel{[r1]}{=}& \bigg[ \left( \Mcal{Z}_\theta(\abar^2) -
                        \Mcal{Z}_\theta(c^2) 
                        \right) \max\{\sx,Bc^{-2\theta}\} + \\
    &\bM \left( \Mcal{Z}_\theta(a^2) - \Mcal{Z}_\theta(1)\right)
      \bigg] \matrixterms{\theta}{\gsj}
      \stackrel{[r2]}{=} \ \mb{Q}_{\sx}(\theta) \matrixterms{\theta}{\gsj} ,
  \end{align*}
  where [r1] is a result of \eqref{eq:QR} and the facts
  that $\Mcal{Z}_\theta(\abar^2)-\Mcal{Z}_\theta(c^2) < \mb{0}$, and
  $x^2_{\Sj} \geq \sx \geq Bc^{-2\theta}$, and [r2] uses
  the definition of $\mb{Q}_{\sx}(\theta)$. Thus, we have demonstrated
  that for any given $\sx \geq 0$, if $x^2_{\Sj} \geq \sx$ then
  $\J{\Sj}{\theta} \leq \mb{Q}_{\sx}(\theta)
  \matrixterms{\theta}{\gsj}$.

  Now, suppose $x^2_{\Sj} \geq \sx$ and
  $\mb{Q}_{\sx}( D + \Mcal{B} ) < \mb{0}$ for some
  $\Mcal{B} \in \intnonneg$, where $D$ is the operational value of the
  look-ahead parameter. Then, through a recursive application of
  Theorem~\ref{th:NEXT} $\Mcal{B}$ times, we get
  \begin{align}
    \expect{\etpol}{\G{\Sj + \Mcal{B}}{D} \vert I_{\Sj}^+} 
    =
    \expect{\etpol}{\J{\Sj}{D+ \Mcal{B}} \vert I_{\Sj}^+ } \leq
    \mb{Q}_{\sx}( D + \Mcal{B} ) < \mb{0}
    . \label{eq:intermediate2_tfstate}
  \end{align}
  Hence, from the design of the event-triggered
  policy~\eqref{eq:et_policy}, it follows that $T_j>\Sj+\Mcal{B}$, or
  in other words, no transmission takes place at least $\Mcal{B}$
  timesteps from $\Sj$, \emph{in expectation}. Thus,
  \begin{align*}
    \expect{\etpol}{ |\horzero_j| \big\vert I_0^+, x^2_{\Sj} \geq \sx} &\geq
    \czero.
  \end{align*}
	We now consider Claim (b). Note that $\tk = 1$ for all
  $k \in \horone_j$, and by the structure of the event-triggered
  policy, $\expect{\etpol}{ |\horone_j| }$ is simply the expected
  number of timesteps for reception under a string of continuous
  transmission attempts, starting from timestep $T_j$ and channel
  state $\gamma_{T_j}$. To capture the same, we define the constant
  $\cone_i$ for $i\in\until{n}$ as
	\begin{align*}
	\cone_i & \ldef \, \mathbb{E} [ w-T_j \, \vert \, w \geq T_j :
            r_{w}=1,t_w=1, r_{w-1}=0,\\
	& t_{w-1}=1,\cdots,r_{T_j}=0,t_{T_j}=1 \, \vert \, \gamma_{T_j}=i]\\
	& = \dt\left[ \sum_{s=0}^\infty s(\ponee)^s \right] \canbasis{i} =
   \dt(\ponee)(\mb{I}-\ponee)^{-2}\canbasis{i}.
	\end{align*}
	We bound $|\horone_j|$ by simply choosing the highest value of
  $\cone_i$ among $i\in\until{n}$, thereby showing that $\cone$ is
  indeed an upper bound on $|\horone_j|$, and proving Claim (b) and
  hence also the result.
\end{proof}

Note that the term $\czero$ in the upper bound on $\tfs{\sx}$ is basically the $\Mcal{B}$-maximizer of $\mb{Q}_{\sx}(D+\Mcal{B})$ under the constraint that $ \mb{Q}_{\sx}(D+\Mcal{B}) < \mb{0}$. This fact illuminates the trade-off between control performance and transmission fraction, which we highlight in the following remark.

\begin{remark}[Tradeoff between control performance and transmission fraction]
	Suppose for a given value of $\sx$ and some $\psi \in \intpos$, we
  have $\mb{Q}_{\sx}(\psi)<\mb{0}$ but
  $\mb{Q}_{\sx}(\psi+1)\canbasis{i} \geq 0$ for at least one
  $i\in\until{n}$. Then if the operational value of the look-ahead
  parameter is $D$, we note that $D+\Mcal{B} = \psi$. The system
  designer can either choose a high value of $D$ (conservative
  control) but this results in a lower value of $\Mcal{B}$, and thus a
  larger upper bound on $\tfs{\sx}$. Conversely, a lower value of $D$
  (aggressive control) leads to a higher value of $\Mcal{B}$, and thus
  a smaller upper bound on $\tfs{\sx}$. \bulletend
	\label{rem:perf_tradeoff}
\end{remark}

We show in the following result that an upper bound on the asymptotic
transmission fraction, $\tf{\infty}$ can be obtained by setting
$\sx=Bc^{-2D}$ in the upper bound of $\tfs{\sx}$ provided in
Theorem~\ref{th:stateTF}. We present its proof in
Appendix~\ref{sec:aux-proofs}.

\begin{corollary}[Upper bound on asymptotic transmission fraction]
	The asymptotic transmission fraction $\tf{\infty}$ is upper bounded by
	\begin{equation*}
	\tf{\infty} \leq \frac{\cone}{\czero_\infty + \cone},
	\end{equation*}
	where
	\begin{align*}
    &\czero_\infty \ldef \argmax\limits_{\Mcal{B}\in\intnonneg} \{
      \mb{Q} (D+\Mcal{B}) < \mb{0} \}
	\end{align*}
	and $\cone$ as defined in Theorem~\ref{th:stateTF}. \qed
	\label{cor:infTF}
\end{corollary}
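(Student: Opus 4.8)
The plan is to exploit the cyclic structure of the event‑triggered policy and run a renewal‑type argument over inter‑reception horizons. Recall that on the $\tth{j}$ horizon $\horizon_j = \intrangeoc{\Sj}{\Sjnext}$ the policy is silent on $\horzero_j = \intrangeoo{\Sj}{T_j}$ and transmits on $\horone_j = \intrangecc{T_j}{\Sjnext}$, so the number of transmissions in cycle $j$ equals $|\horone_j|$ and the cycle length equals $|\horzero_j| + |\horone_j|$. I would evaluate the transmission fraction at the random time $K = S_N$: telescoping gives $\sum_{i=1}^{S_N} t_i = \sum_{j=0}^{N-1}|\horone_j|$ and $S_N = \sum_{j=0}^{N-1}(|\horzero_j| + |\horone_j|)$, so $\tf{S_N}$ is the ratio of the expectations of these two sums. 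Letting $N \to \infty$ recovers $\tf{\infty}$, since $S_N \geq N \to \infty$ and any deterministic horizon is sandwiched between two consecutive reception times.

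The two ingredients I need are the per‑cycle bounds $\expect{\etpol}{|\horone_j| \,\vert\, I_{\Sj}^+} \leq \cone$ and $\expect{\etpol}{|\horzero_j| \,\vert\, I_{\Sj}^+} \geq \czero_\infty$, and crucially both must hold for \emph{every} $j$, not merely during a transient. The first is exactly Claim~(b) in the proof of Theorem~\ref{th:stateTF}, which is already uniform in $x_{\Sj}$. For the second I would revisit Claim~(a) of that proof with the choice $\sx = B c^{-2D}$. The key observation is that for every $\theta \geq D$ one has $B c^{-2\theta} \geq B c^{-2D}$, hence $\max\{\sx, B c^{-2\theta}\} = B c^{-2\theta}$ and therefore $\mb{Q}_{\sx}(\theta) = \mb{Q}(\theta)$ on that range; consequently the $\sx$‑dependent quantity $\czero$ of Theorem~\ref{th:stateTF} collapses to $\czero_\infty$. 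Moreover, since one may always replace the surrogate threshold by $\min\{x_{\Sj}^2, B c^{-2D}\} \leq x_{\Sj}^2$, the hypothesis $x_{\Sj}^2 \geq \sx$ is automatically met, so the bound $\J{\Sj}{\theta} \leq \mb{Q}(\theta)\,\matrixterms{\theta}{\gsj}$ for $\theta \geq D$ holds at \emph{every} reception time; feeding this into the recursive use of Theorem~\ref{th:NEXT} and the event‑trigger rule — exactly as in the proof of Claim~(a) — yields $\expect{\etpol}{|\horzero_j| \,\vert\, I_{\Sj}^+} \geq \czero_\infty$ for all $j$.

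With these bounds in hand, applying the tower property to pass from conditioning on $I_{\Sj}^+$ to conditioning on $I_0^+$ gives $A \ldef \expect{\etpol}{\sum_{i=1}^{S_N} t_i \,\vert\, I_0^+} \leq N\cone$ and $\expect{\etpol}{S_N \,\vert\, I_0^+} \geq N\czero_\infty + A$. Since $A \mapsto A/(N\czero_\infty + A)$ is nondecreasing, $\tf{S_N} \leq A/(N\czero_\infty + A) \leq N\cone/(N\czero_\infty + N\cone) = \cone/(\czero_\infty + \cone)$, uniformly in $N$; letting $N \to \infty$ gives the stated bound. I expect the main obstacle to be bookkeeping rather than a conceptual difficulty: verifying that the lower bound $\czero_\infty$ is genuinely valid at every reception time (the $\sx = B c^{-2D}$ / surrogate‑threshold observation above), checking that the relevant conditional expectations are finite so the ratio is well defined (which rests on $a^2 \specrad{\ponee} < 1$, hence $\specrad{\ponee} < 1$ and $\cone < \infty$, together with the policy guaranteeing $T_j, \Sjnext < \infty$), and confirming that the limit along $\{S_N\}$ really coincides with $\tf{\infty}$ as defined.
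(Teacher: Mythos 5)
Your proposal is correct and follows essentially the same route as the paper: it reuses the per-cycle bounds from Theorem~\ref{th:stateTF} and obtains $\czero_\infty$ by making Claim~(a) uniform over all reception states, which amounts to replacing $\mb{Q}_{\sx}$ by its worst case $\mb{Q}$ (the paper phrases this as maximizing $\mb{Q}_{\sx}(D+\Mcal{B})$ over $\sx\geq 0$, using $c^2>\abar^2$; you phrase it as $\mb{Q}_{\sx}(\theta)=\mb{Q}(\theta)$ for $\sx\leq Bc^{-2\theta}$ — same fact). Your explicit telescoping over reception cycles and the monotone-ratio step are details the paper glosses, but they do not change the argument.
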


%

\section{Simulations}
\label{sec:Simulations}

In this section, we present simulation results to validate the
event-triggered policy. We choose a scalar plant with $a = 1.10$, with
desired convergence rate $c = 0.98$ and $\abar = 0.95c$. We choose the
following $\pzero$ and $\pone$ matrices.
\begin{align*}
  \pzero= 
  \begin{bmatrix}
    0.5&0.4&0.4&0.3\\
    0.3&0.3&0.2&0.3\\
    0.2&0.2&0.2&0.3\\
    0.0&0.1&0.2&0.1
  \end{bmatrix}
                 \!,\;
                 \pone=
                 \begin{bmatrix}
                   0.1&0.0&0.1&0.2\\
                   0.1&0.1&0.2&0.2\\
                   0.1&0.3&0.3&0.3\\
                   0.7&0.6&0.4&0.3
                 \end{bmatrix} .
\end{align*}
The matrices $\pzero$ and $\pone$ are chosen to have the
characteristic that the probability of going to a higher numbered
state is \emph{lower} in $\pzero$, and \emph{higher} in $\pone$, as
compared to staying in the same or going to a lower numbered state.
Correspondingly, we choose the following values of packet drop
probability for channel states 1 through 4
\begin{equation*}
  \mb{e}=
  \begin{bmatrix}
    0.1&0.2&0.3&0.4
  \end{bmatrix}^{T} .
\end{equation*}

We choose the process noise $v_k$ randomly with independent and
identical normal distribution having mean $0$ and variance $M =
1$. For the chosen parameters, the procedure in
Appendix~\ref{sec:Bstar} gives $B^*=8.8483$. We choose $B = 18 > B^*$
and pick the initial plant state as $x_0 = 15.5B$.
\subsection*{Simulation Results}

We simulated the above system, using MATLAB, for various values of
$D$. We obtained $10000$ empirical iterations for each value of
$D$. Figure~\ref{f:1a} demonstrates the effect of the choice of the
look-ahead parameter $D$ on the closed-loop plant state evolution. For
a higher value of $D$, the policy $\etpol$ is more conservative in the
sense that the value of squared plant state is in general lesser and
farther away from the desired envelope than for a higher value of $D$
(refer to Remark~\ref{rem:perf_tradeoff}). Note that in
Figure~\ref{f:1a}, the squared plant state in general has a lower
value with $D=2$ than with $D=1$.

Figure~\ref{f:1b} demonstrates the empirical value of $\tf{k}$ for two
values of $D$ over a horizon of $5,000$ timesteps. Note that very
large $k$, the transmission fraction reaches an asymptotic value.
Higher values of $D$ lead to a greater steady state value of the
transmission fraction. Moreover, we also calculate and present the
theoretical upper bound on transmission fraction for both $D=1$ and
$D=2$, using Corollary~\ref{cor:infTF}. Figure~\ref{f:1c} depicts
empirically calculated $\tfs{\sx}$, \tfsxText, for $D=1$, $D=2$, and
$D=3$.  To generate the empirical value of $\tfs{\sx}$ the $\sx$-axis
was bucketed with buckets of exponentially increasing size, since
finding a continuous empirical relation between $\sx$ and $\tfs{\sx}$
would be impossible within a finite number of simulated trajectories.
We again see that higher values of $D$ result in higher transmission
fractions.

Figure~\ref{f:1d} is a plot of the theoretical upper bounds on both
$\tf{\infty}$ and $\tfs{\sx}$, the asymptotic transmission fraction
and the transmission fraction up to state $\sx$, respectively, for
various values of the look-ahead parameter $D$. From this figure, it
can be visually verified that the upper bound on $\tf{\infty}$ given
by Corollary~\ref{cor:infTF} is the same as the upper bound on
$\Mcal{F}_{Bc^{-2D}}$ given by Theorem~\ref{th:stateTF}, for different
values of $D$.

\begin{figure}[t!]
	\centering \subfloat[\label{f:1a}]{%
		\includegraphics[width=0.47\linewidth]{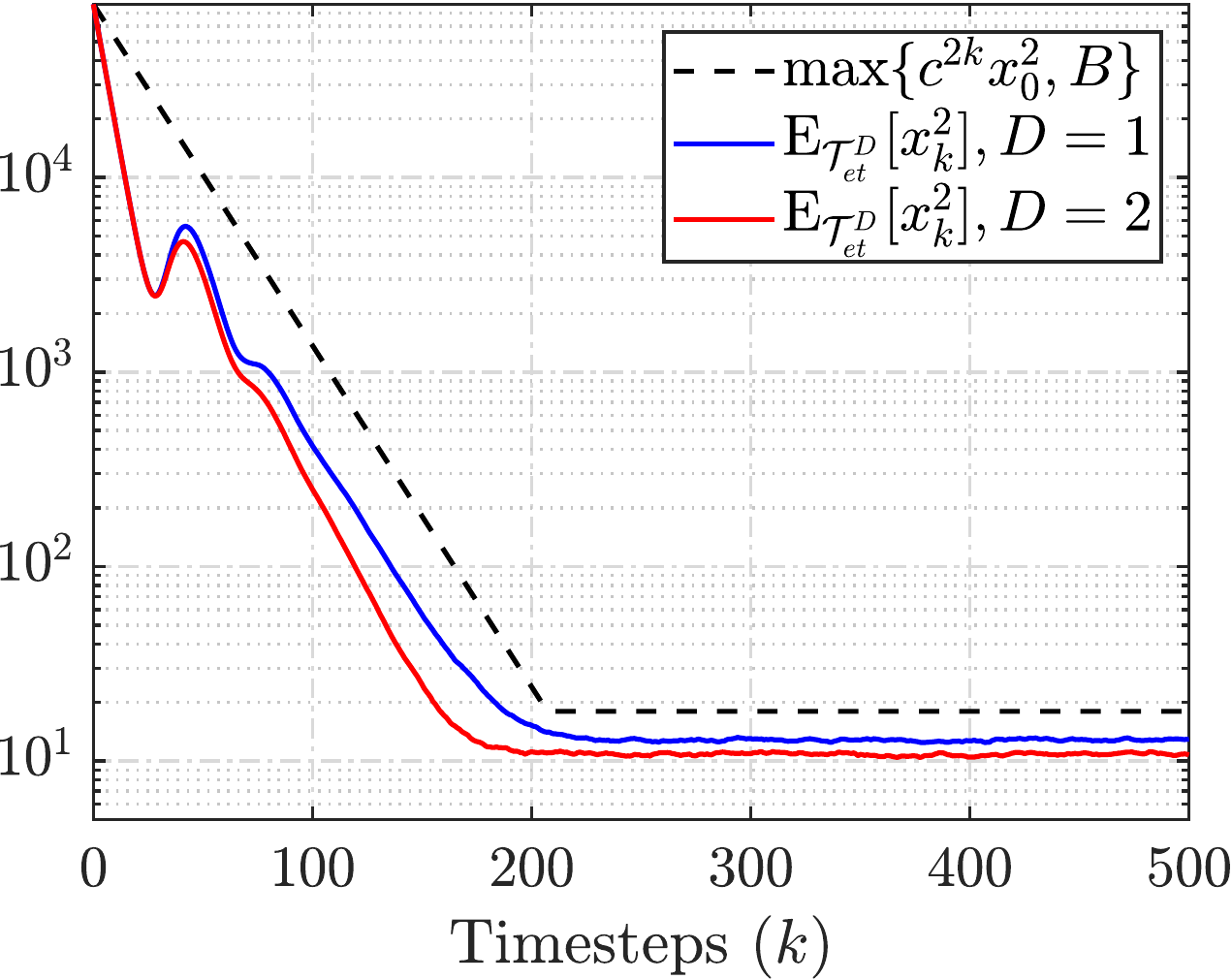}}
  \hspace{0.04\linewidth}%
  \subfloat[\label{f:1b}]{%
		\includegraphics[width=0.47\linewidth]{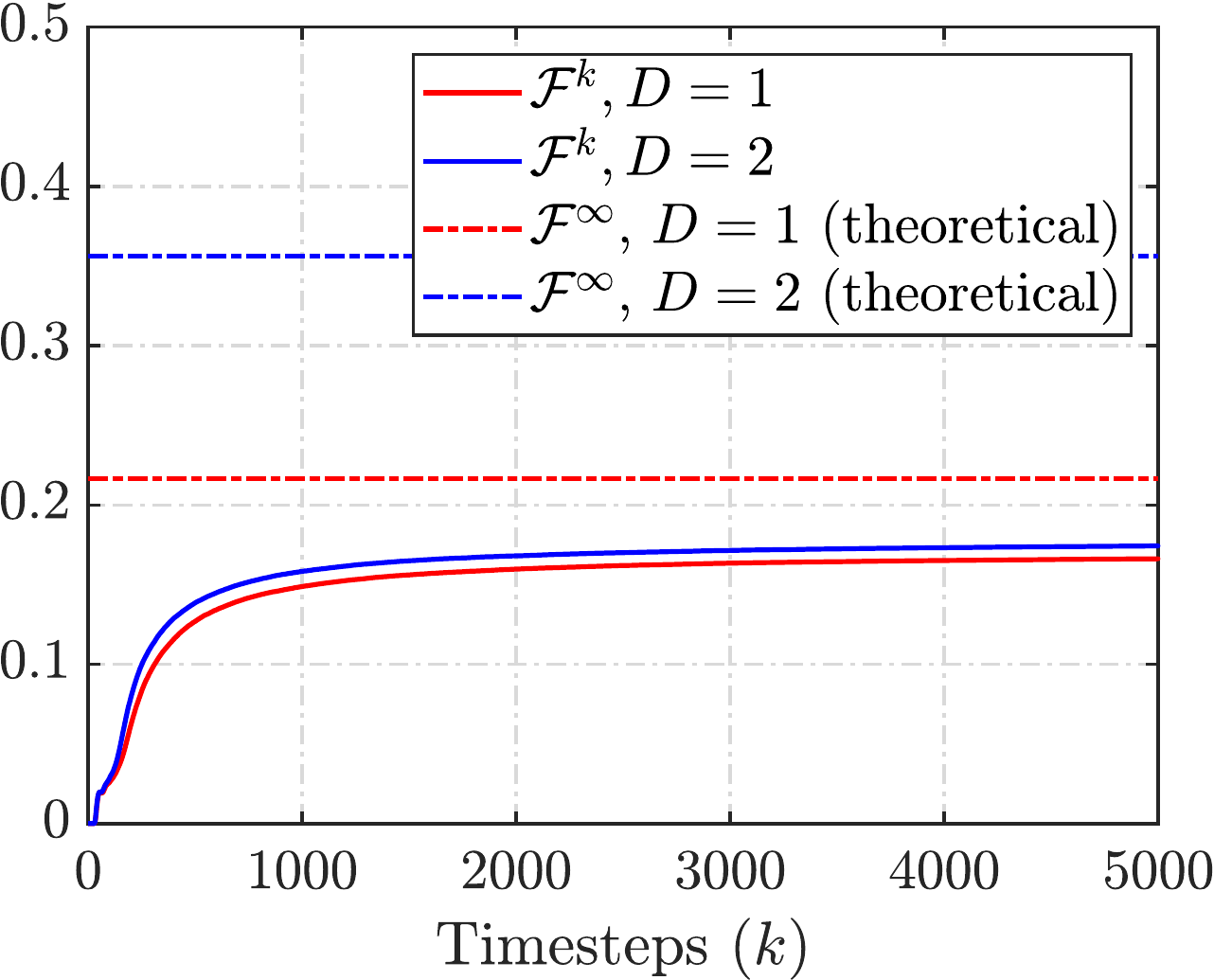}}
	\\
	\subfloat[\label{f:1c}]{%
		\includegraphics[width=0.47\linewidth]{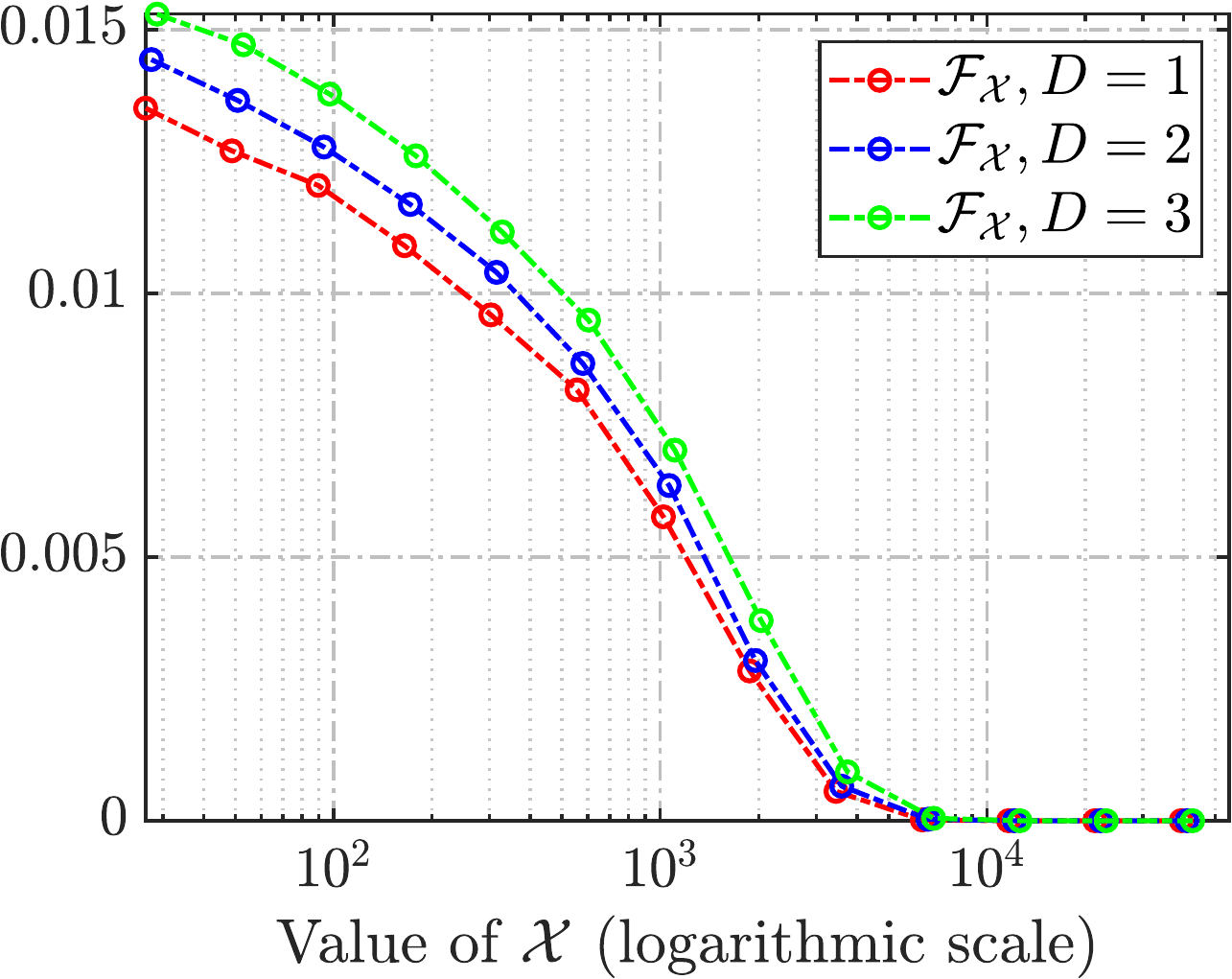}}
  \hspace{0.04\linewidth}%
  \subfloat[\label{f:1d}]{%
		\includegraphics[width=0.47\linewidth]{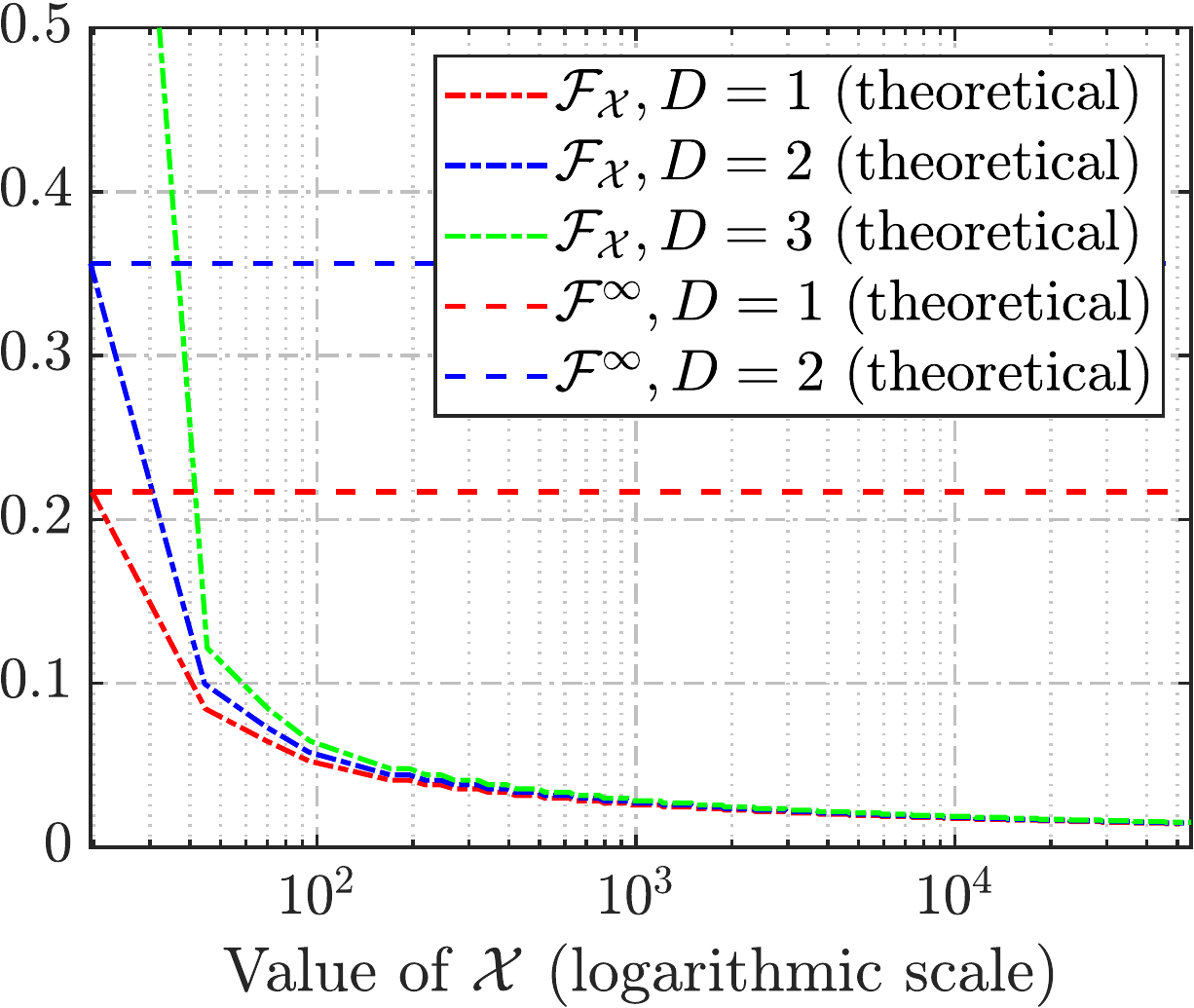}}
  \centering
	\caption{Simulation results various values of $D$. (a) Evolution of
    the second moment of the plant state. (b) Empirical transmission
    fraction and theoretical upper bound on $\tf{\infty}$. (c)
    Empirical \tfsxText. (d) Theoretical upper bounds on $\tf{\infty}$
    and $\tfs{\sx}$.}
	\label{fig1} 
\end{figure}

\section{Conclusion}
\label{sec:conclusion}

In this paper, we have considered a networked control system
consisting of a scalar linear plant with process noise and
non-collocated sensor and controller. Further, the sensor communicates
over a time-varying channel whose state evolves according to an
action-dependent Markov process. The state of the channel determines
the probability with which a packet transmitted by the sensor is
dropped. In this setting, we have designed an event-triggered
transmission policy that guarantees second moment stabilization of the
plant state at a desired rate of convergence to an ultimate bound. We
also derived upper bounds on the transient and the asymptotic
transmission fraction, the fraction of timesteps on which the sensor
transmits. We have verified and illustrated our analysis and
theoretical guarantees through simulations. Future work in this
direction includes incorporation of imperfect measurement of plant and
channel state, application of the proposed action-dependent Markov
channel framework to specific scenarios such as control over a shared
channel and control with energy-harvesting components.

\bibliographystyle{myIEEEtran}
\bibliography{alias,pavan,main}

\appendices

\section{Proofs of Auxiliary Results}
\label{sec:aux-proofs}

\begin{proof}[Proof of Lemma~\ref{lem:matrix-geom-series}]
	First we assume that $\specrad{\mb{K}}<1$. We let $\mb{S}_{\eta}$
  represent the following partial sum
	\begin{equation*}
    \mb{S}_{\eta} \ldef \sum_{w=0}^{\eta-1}\mb{K}^w .
	\end{equation*}
	Observe that $(\id - \mb{K}) \mb{S}_\eta = \id - \mb{K}^\eta$. The
  assumption $\specrad{\mb{K}} < 1$ implies $(\id-\mb{K})$ is
  invertible and hence the following holds
	\begin{align*}
    \mb{S}_{\eta} = (\id - \mb{K})^{-1} (\id - \mb{K}^\eta) .
	\end{align*}
	Again since $\specrad{\mb{K}} < 1$, $\mb{K}^\eta$ converges to zero
  as $\eta \rightarrow \infty$ and thus $\mb{S}_{\eta}$ converges to
  $(\id - \mb{K})^{-1}$ asymptotically.
  
	Now we assume that $\specrad{\mb{K}} \geq 1$. Let $\lambda$ be an
  eigenvalue of $\mb{K}$ with $| \lambda | \geq 1$. Consider a
  corresponding eigenvector $\mb{v}$ (with possibly complex entries).
	\begin{align*}
	\left( \sum_{w=0}^\infty \mb{K}^w \right) \mb{v} = \left(
	\sum_{w=0}^\infty \lambda^w \right) \mb{v} ,
	\end{align*}
  Under the assumption $\specrad{\mb{K}}\geq 1$, the right hand side
  of the above equation diverges and therefore the matrix-geometric
  series $\left(\sum_{w=0}^\infty\mb{K}^w\right)$ also diverges.
\end{proof}

\begin{proof}[Proof of Lemma~\ref{lem:gdk-convergence}]
  From~\eqref{eq:gdk-expand}-\eqref{eq:omega_prob_def}
  and~\eqref{eq:expect-h}-\eqref{eq:gD-def}, we see that an expansion
  of $\gdk$ involves terms such as $g_D(b,\mb{p})$ with $b$ equal to
  $\abar^2, a \abar, a^2$ and $c^2$.  Using~\eqref{eq:param-ordering}
  and noting that $\specrad{b_1\ponee}>\specrad{b_2\ponee}$ when
  $|b_1|>|b_2|$, we can state that
  $\specrad{a^2\ponee}>\specrad{\ti{b}\ponee}$ for $\ti{b}$
  assuming values $\abar^2$, $a\abar$, and $c^2$.  Thus by
  Lemma~\ref{lem:matrix-geom-series},
  $\specrad{a^2\ponee} = a^2\specrad{\ponee} < 1$ is a necessary and
  sufficient condition for convergence of $\gdk$.
\end{proof}

\begin{proof}[Proof of Lemma~\ref{lem:jdsjclosed}]
  Most terms in the closed form of $\gdk$ follow directly
  from~\eqref{eq:gdk-expand}, the series expansion of $\gdk$, the
  closed form of $\Omega_{D}(w,\mb{p})$
  in~\eqref{eq:omega-prob-closed-form}, the expansion of the
  expectation term~\eqref{eq:expect-h}, the
  definition~\eqref{eq:gD-def} and the closed
  form~\eqref{eq:gD-closed-form} of $g_D(b,\mb{p})$. We only need to
  find the closed form of
	\begin{equation*}
    \sum_{w=D}^\infty \max\{ c^{2w}c^{2(k-\Rk)}x^2_{\Rk} , B \}\;
    \Om{D}(b, \pk) .
	\end{equation*}
  We split this summation into two parts based on if $c^{2w} N_k$ is
  larger or smaller than $B$. Observe that $\mu$, defined
  in~\eqref{eq:qkD}, is the smallest integer $w \geq D$ such that
  $B \geq c^{2w}N_k$. Then,
	\begin{align*}
    &\sum_{w=D}^\infty \max\{ c^{2w}c^{2(k-\Rk)}x^2_{\Rk},B \}\;
      \Om{D}(w, \pk) \\ 
    &\; =  g_D(c^2,\pk) \, N_k + \sum_{w=\mu}^\infty
      (B - c^{2w} N_k)\Om{D}(w,\pk)\\ 
    &\; \stackrel{[r1]}{=} Bf_D(1,\pk) + N_k \left[
      g_D(c^2,\pk)-f_D(c^2,\pk)\right] ,
	\end{align*}
  where we obtain [r1] by observing that
  \begin{align*}
    &\sum_{w = \mu}^\infty b^w \Om{D}(w, \mb{p}) = \sum_{w =
      \mu}^\infty b^w \dt (\ponee)^{(w-D)} \pzero^{(D)}
      \mb{p} \\
    &= b^{\mu} \dt (\ponee)^{(\mu-D)} \sum_{w = 0}^\infty 
      (b \ponee)^{w} \pzero^{(D)} \mb{p} = f_D( b , \mb{p} ) ,
  \end{align*}
  assuming $\specrad{b \ponee} < 1$ (see
  Lemma~\ref{lem:matrix-geom-series}). With this we obtain the
  complete closed form expression of the look-ahead function $\gdk$
  provided in Lemma~\ref{lem:jdsjclosed}.
\end{proof}

\begin{proof}[Proof of
  Proposition~\ref{prop:open_loop_performance_function}]
  The proof relies on demonstrating that $H(w,y)>0$ for all
  $w \in \intpos$ and for all $y\in(B,B_0)$.  This implies that if
  $x_{\Sj}^2 \in (B,B_0)$, then the system would violate the online
  objective on the \emph{very} next timestep. From~\eqref{eq:H-expr},
  note that for a fixed $y$, the function $H(w,y)$ can be written as
	\begin{align*}
	H(w,y) = 
	\begin{cases}
	l_1(w,y), \text{ if } w \leq w_{**}(y)\\
	l_2(w,y), \text{ if } w > w_{**}(y),
	\end{cases}
	\end{align*}
  with
  \begin{align*}
    l_1(w,y) &\ldef \abar^{2w} y + \bM (a^{2w}-1) - c^{2w}y \\
    l_2(w,y) &\ldef \abar^{2w} y + \bM (a^{2w}-1) - B ,
  \end{align*}
  where $\displaystyle w_{**}(y) \ldef \frac{\log (y/B)}{\log(1/c^2)}$
  is such that. Now, it suffices to prove the following two claims.

  \emph{Claim (a):} $l_1(w,y)>0$ for all $w\in\intpos$ for
  $y\in(B,B_0)$.
  
  \emph{Claim (b):} $l_2(w,y) > 0$ for all $w\in\intpos$ for
  $y\in(B,B_0)$.

  First, note that $l_1(0,y) = 0$ for all values of $y$. Next,
  evaluating the partial of $l_1(w,y)$ with respect to $w$ at $w=0$
  and for $y\in(B,B_0)$, we obtain
	\begin{align*}
    \frac{ \partial l_1(0,y) }{ \partial w }%
    &= \log(\abar^2/c^2)y + \bM\log(a^2)\\
    & \stackrel{[r1]}{>} \log(\abar^2/c^2)B_0 + \bM \log(a^2)
      \stackrel{[r2]}{=} 0 .
	\end{align*}
  Note that we have used the fact that $\abar^2 < c^2$ to obtain [r1],
  and used the definition of $B_0$ in [r2]. Since $l_1(w,y)$ is a
  quasiconvex function of $w$ (Lemma IV.8,~\cite{PT-MF-JC:2018-tac}),
  it is increasing for all $w > 0$, which proves claim~(a).
	
  Now, we prove claim~(b). We first derive a function $g(w)$ that is a
  lower bound on $l_2(w,y)$ for $w \geq 0$ and $y \in (B, B_0)$.
	\begin{align*}
    l_2(w,y) &= \abar^{2w}y-\bM(a^{2w}-1) - B\\
             &> \abar^{2w}y - y + \bM(a^{2w}-1)\\
             & \stackrel{[r3]}{>} \frac{B_0}{c^{2w}} (\abar^{2w}-1) +
               \bM(a^{2w}-1) \rdef g(w) ,
	\end{align*}
  where in [r3], we have used the fact that
  $\abar^2 < 1$, $c^2 < 1$ and $w \geq 0$.
  Note that $g(w)$ is
  strictly convex in $w$ because
	\begin{align*}
    \frac{ \partial^2 g(w) }{ \partial w^2 }%
    &= B_0\frac{\abar^{2w}}{c^{2w}} \log^2
      (\abar^2/c^2) + \bM a^{2w}\log^2(a^2) > 0.
	\end{align*}
  The derivative of $g(w)$ evaluated at $w=0$ is
  \begin{equation*}
    \frac{ \partial g(0) }{ \partial w } = B_0 \log(\abar^2/c^2) + \bM
    \log(a^2) = 0 , 
  \end{equation*}
  where we have used the definition of $B_0$. Since $g(0) = 0$, $g(w)$
  has slope 0 at $w=0$ and $g$ is strictly convex in $w$, we conclude
  that $l_2(w,y) > g(w) > 0$ for all $w \in \intpos$, which proves
  claim~(b) and thus concluding the proof.
\end{proof}

\begin{proof}[Proof of Lemma~\ref{lem:jdsjclosed2}]
  Recall the infinite series expansion of $\jdsj$
  in~\eqref{eq:jdsj_infinite_sum_form}. To evaluate it, we substitute
  $H(w,x^2_{\Sj})$ with its closed form from~\eqref{eq:H-expr} and
  that of $\ti{\Omega}_D(w,\gamma_{\Sj})$
  from~\eqref{eq:OmegaTildeClosedForm}. Correspondingly, we get an
  expression that is the sum of multiple infinite series, as in the
  derivation of $\gdk$ in Lemma~\ref{lem:jdsjclosed}. To evaluate said
  terms, we define the summation functions $\ti{f}_\theta(b,\gamma)$
  and $\ti{g}_\theta(b,\gamma)$ given in the statement of the lemma
  and which are analogous to $f_\theta(b,\mb{p})$ and
  $g_\theta(b,\mb{p})$, respectively and used for obtaining the
  expression for $\gdk$.
  Proceeding exactly like in Lemma~\ref{lem:jdsjclosed}, we obtain the
  expression for $\jdsj$.
\end{proof}

\begin{proof}[Proof of Proposition~\ref{prop:PEF_upper_bound}]
	We partition the possible values of $x^2_{\Sj}$ into two sets,
	\begin{equation*}
    \llow \ldef [0,Bc^{-2\theta}), \;\; \lhi \ldef [Bc^{-2\theta},\infty),
	\end{equation*}
	and demonstrate that $\J{\Sj}{\theta}<\Mcal{R}_j (\theta)$ in each
  case. The proof is centered around the following two claims, which
  establish bounds on some important terms of the closed form of
  $\J{\Sj}{\theta}$ from Lemma~\ref{lem:jdsjclosed}.
	
	\emph{Claim (a):} If $x^2_{\Sj} \in \llow$, then
	\begin{equation*}
    B\ti{f}_\theta (1,\gsj) \geq \frac{B}{c^{2\theta}} \ti{g}_\theta
    (c^2,\gsj).
	\end{equation*}
	
	\emph{Claim (b):} If $x^2_{\Sj} \in\lhi$, then
	\begin{equation*}
    B\ti{f}_\theta (1,\gsj) \geq x^2_{\Sj} \ti{f}_\theta (c^2,\gsj).
	\end{equation*}
	
	For proving Claim (a), we first recall the term $\nu$ in the closed
  form of $\ti{f}_\theta(b,\gamma)$ from Lemma~\ref{lem:jdsjclosed}
  and note that $\nu=\theta$ when $x^2_{\Sj}< Bc^{-2\theta}$. Thus,
  $\ti{g}_\theta(b,\gsj) = \ti{f}_\theta(b,\gsj)$ when
  $x^2_{\Sj} \in\llow$. We now observe that
	\begin{align*}
    Bc^{-2\theta} \ti{g}_\theta(c^2,\gsj) %
    &\stackrel{[r1]}{=} \frac{B}{c^{2\theta}} c^{2\theta} \dt
      (\mb{I}-c^2\ponee)^{-1}
      \pzero^{(\theta-1)} \pone \canbasis{\gsj} \\
    &\stackrel{[r2]}{\leq} B\dt (\mb{I}-\ponee)^{-1} \pzero^{(\theta-1)} \pone
      \canbasis{\gsj} \\
    &= B\ti{f}_\theta(1,\gsj) ,
	\end{align*}
	where [r1] uses the definition of $\ti{g}_\theta(b,\gsj)$, [r2]
  follows from the fact that the matrix $(\mb{I}-c^2\ponee)^{-1}$ is
  element-wise smaller than $(\mb{I}-\ponee)^{-1}$ because $c^2<1$ and
  all elements of $\ponee$ are non-negative, along with the fact that
  $(\mb{I} - \mb{K})^{-1} = \sum_{w=0}^\infty \mb{K}^w$. This
  completes the proof of Claim (a).
	
	To prove Claim (b), we establish an upper bound on $c^{2\nu}$ under
  the assumption that $x^2_{\Sj} \in \lhi$. Note that
  \begin{align*}
    c^{2\nu} = c^{2\max \left\{ \theta, \left\lceil
    \frac{\log(x^2_{\Sj}/B)}{\log(1/c^2)} \right\rceil
    \right\}} \leq c^{2\left\lceil
    \frac{\log(B / x^2_{\Sj})}{\log(c^2)} \right\rceil } \leq
    \frac{B}{x^2_{\Sj}},
  \end{align*}
  where we have again used the fact that $c^2 < 1$. From this bound,
  one can upper bound $x^2_{\Sj} \ti{f}_\theta (c^2,\gsj)$ as
	\begin{align*}
    x^2_{\Sj} \ti{f}_\theta (c^2, &\gsj) \\
    &\leq B \dt (\ponee)^{(\nu-\theta)} (\mb{I}-c^2\ponee)^{-1}
      \pzero^{(\theta-1)} \pone \canbasis{\gsj} \\
    &\leq B\dt (\ponee)^{(\nu-\theta)} (\mb{I}-\ponee)^{-1}
      \pzero^{(\theta-1)} \pone \canbasis{\gsj} \\
    &= B\ti{f}_\theta (1,\gsj) .
  \end{align*}
	This concludes the proof of Claim (b).
	
	Now, we recall the closed form of $\J{\Sj}{\theta}$. If
  $x^2_{\Sj}\in\llow$, we have
  $\ti{f}_\theta(c^2,\gsj)-\ti{g}_\theta(c^2,\gsj) = 0$ and
  $x^2_{\Sj}<Bc^{-2\theta}$, while
  $\ti{g}_\theta(\bar{a}^2,\gsj) \geq 0$. These facts along with
  Claim~(a) imply that $\J{\Sj}{\theta} \leq \Mcal{R}_j(\theta)$ when
  $x^2_{\Sj} \in \llow$. In the case that $x^2_{\Sj} \in \lhi$, we
  rearrange the closed form of $\J{\Sj}{\theta}$ as
	\begin{align*}
    \J{\Sj}{\theta} = \ %
    &[\ti{g}_\theta (\abar^2,\gsj) - \ti{g}(c^2,\gsj)]x^2_{\Sj} + \bM[
      \ti{g}_\theta (a^2,\gsj)\\ 
    -&\ti{g}_\theta(1,\gsj)] -[B\ti{f}_\theta(1,\gsj) - x^2_{\Sj}
       \ti{f}_\theta(c^2,\gsj) ] .
	\end{align*}
	Then using Claim~(b), the fact that
  $\ti{g}_\theta(\abar^2,\gsj) < \ti{g}_\theta(c^2,\gsj)$ (since
  $\abar^2<c^2$), and lastly the fact that
  $x^2_{\Sj}\geq Bc^{-2\theta}$, we conclude that
  $\J{\Sj}{\theta} \leq \Mcal{R}_j(\theta)$ when
  $x^2_{\Sj}\in\lhi$. Thus, $\Mcal{R}_j(\theta)$ uniformly upper
  bounds $\J{\Sj}{\theta}$ for all $x^2_{\Sj}\in[0,\infty)$.
\end{proof}

\begin{proof}[Proof of Corollary~\ref{cor:infTF}]
	The proof is similar to that of Theorem~\ref{th:stateTF} except for
  one key difference. We note that in Theorem~\ref{th:stateTF},
  $\czero$ was obtained as the $\Mcal{B}$-maximizer of
  $\mb{Q}_{\sx}(D+\Mcal{B})$ under the constraint that
  $\mb{Q}_{\sx}(D+\Mcal{B}) < \mb{0}$. This ensured that the
  transmission fraction over the horizon $\intrangeoc{\Sj}{\Sjnext}$
  is upper bounded by $\cone(\czero+\cone)^{-1}$, under the assumption
  that $x^2_{\Sj}\geq \sx$. In case of asymptotic transmission
  fraction, we know that said upper bound on transmission fraction
  over the horizon $\intrangeoc{\Sj}{\Sjnext}$ has to hold for all
  $j\in\intnonneg$, and equivalently for all $x^2_{\Sj}>0$. Thus we
  derive the term $\czero_{\infty}$ by first maximizing
  $\mb{Q}_{\sx}(D+\Mcal{B})$ over all possible values of $\sx$ and
  then choosing the largest value of $\Mcal{B}$ such that
  $\mb{Q}_{\sx}(D+\Mcal{B})<\mb{0}$ and setting $\czero_{\infty}$
  equal to said value.
	
	The former maximization is carried out because
	$\mb{Q}_{\sx} (D+\Mcal{B})\matrixterms{D+\Mcal{B}}{\gsj}$ acts as an
	upper bound on $\J{\Sj}{D+\Mcal{B}}$, which we want to be negative
	so that~\eqref{eq:intermediate2_tfstate} is valid. Thus, we let
	\begin{align*}
	\czero_\infty \ldef%
	&\argmax\limits_{\Mcal{B}\in\intnonneg} \left\{ \bigg\{\max \limits_{
		\sx \in \real, \ \sx \geq 0}  \mb{Q}_{\sx}( D + \Mcal{B}) \bigg\}
	< \mb{0} \right\}\\
	=& \argmax \limits_{ \Mcal{B} \in\intnonneg } \{ \mb{Q} ( D+\Mcal{B}
     ) < \mb{0} \},
	\end{align*}
  which follows from the fact that $c^2 > \abar^2$ and the definitions
  of $\mb{Q}_{\sx}( \theta )$ and $\mb{Q}( \theta )$. The rest of the
  proof follows along similar lines as that of
  Theorem~\ref{th:stateTF}.
\end{proof}

\section{Procedure to Compute a Sufficient Lower %
  Bound $B^*$ on the Ultimate Bound $B$} \label{sec:Bstar}

Here, we provide a procedure to compute the lower bound $B^*$ on $B$,
referred to in Theorem~\ref{th:sign_monotonicity_H}. This procedure is
based on the proof of Lemma IV.13 in~\cite{PT-MF-JC:2018-tac} and we
present it here for completeness. First, we define the following
constants
\begin{align*}
  P_1 &\ldef \log(a^2/\abar^2), \quad &P_2 &\ldef \log(a^2c^2/\abar^2), \\
  P_3 &\ldef \log(1/c^2), \quad &P_4 &\ldef \log \left(
                                       \frac{\log(1/\abar^2)}{\bM\log(a^2)}
                                       \right) .
\end{align*}
Then, consider the following functions of $B$
\begin{align*}
  &U(B) \ldef e^{(P_3P_4/P_2)} B^{(P_1/P_2)}, \quad w_{**}(U(B)) \ldef
    \frac{\log(B)}{P_2} + \frac{P_2}{P_4}, \\
  &Y(B) \ldef \abar^{2w_{**}(U(B))}U(B) + \bM a^{2w_{**}(U(B))}, \\
  &F_{**}(U(B)) \ldef Y(B) - \bM - B .
\end{align*}
The function $F_{**}(U(B))$ is strictly concave in $B$ (Lemma
IV.13,~\cite{PT-MF-JC:2018-tac}. Thus, it has at most two zeroes, one
of which is 
\begin{equation*}
  B_0 = \frac{\bM\log(a^2)}{\log(c^2/\abar^2)} .
\end{equation*}
There is another zero $B_z > B_0$ of $F_{**}(U(B))$ only if
$F_{**}(U(B))$ is increasing at $B = B_0$. Such a $B_z$ could be found
numerically. Then, we let
\begin{equation*}
  B^* \ldef %
  \begin{cases}
    B_0, \quad \text{if } F_{**}(U(B)) \text{ is non-increasing at } B
    = B_0 \\
    B_z, \quad \text{otherwise} .
  \end{cases}
\end{equation*}

\end{document}